\def\dsl{\llbracket}
\def\dsr{\rrbracket}
\newcommand{\Lambdaone}{\Lambda^{\{1\}}}
\newcommand{\Lambdaonetwo}{\Lambda^{\{1,2\}}}
\newcommand{\sufrak}{{\mathfrak su}}
\newcommand{\Lambdaonehat}{\hat{\Lambda}^{\{1\}}}
\newcommand{\Lambdaonetwohat}{\hat{\Lambda}^{\{1,2\}}}
\newcommand{\Lcal}{\mathcal{L}}
\newcommand{\Lcalt}{\widetilde{\mathcal{L}}}
\newcommand{\Lcalthat}{\hat{\widetilde{\mathcal{L}}}}
\newcommand{\Lcalhat}{\hat{\mathcal{L}}}
\newcommand{\Dcal}{\mathcal{D}}
\newcommand{\Hq}{\hat{H}_{q}}
\newcommand{\Hint}{\hat{H}_{\mathrm{int}}}
\newcommand{\Hdr}{\hat{H}_{\mathrm{dr}}}
\newcommand{\omegaqj}{\omega_{q_j}}
\newcommand{\sigmap}{\hat{\sigma}_{+}}
\newcommand{\sigmam}{\hat{\sigma}_{-}}
\newcommand{\sigmazj}{\hat{\sigma}_{z,j}}
\newcommand{\sigmapj}{\hat{\sigma}_{+,j}}
\newcommand{\sigmamj}{\hat{\sigma}_{-,j}}
\newcommand{\sigmapk}{\hat{\sigma}_{+,k}}
\newcommand{\sigmamk}{\hat{\sigma}_{-,k}}
\newcommand{\Tphij}{T_{\varphi,j}}
\newcommand{\Tonej}{T_{1,j}}
\newcommand{\Omegadrj}{\Omega_{\mathrm{dr},j}}
\newcommand{\omegadrj}{\omega_{\mathrm{dr},j}}
\newcommand{\phidrj}{\phi_{\mathrm{dr},j}}
\newcommand{\Dinfinput}{D^{\mathrm{inf}}_{|\Psi_{\mathrm{in}}\rangle}}
\newcommand{\Psiin}{\Psi_{\mathrm{in}}}
\newcommand{\Nthat}{\hat{\mathcal{N}}}
\newcommand{\Vt}{\mathcal{V}}
\newcommand{\Nt}{\mathcal{N}}
\newcommand{\Napproxhatk}{\hat{\mathcal{N}}_{\mathrm{approx},k}}
\newcommand{\Napproxhattwo}{\hat{\mathcal{N}}_{\mathrm{approx},2}}
\newcommand{\Napproxhatthree}{\hat{\mathcal{N}}_{\mathrm{approx},3}}
\newcommand{\Napproxhat}{\hat{\mathcal{N}}_{\mathrm{approx}}}
\newcommand{\Napprox}{\mathcal{N}_{\mathrm{approx}}}
\newcommand{\Nactualhat}{\hat{\mathcal{N}}_{\mathrm{actual}}}
\newcommand{\Nactual}{\mathcal{N}_{\mathrm{actual}}}
\newcommand{\Vapprox}{\mathcal{V}_{\mathrm{approx}}}
\newcommand{\Vapproxhat}{\hat{\mathcal{V}}_{\mathrm{approx}}}
\newcommand{\Vactual}{\mathcal{V}_{\mathrm{actual}}}
\newcommand{\Vactualhat}{\hat{\mathcal{V}}_{\mathrm{actual}}}
\newcommand{\pe}{p_{\mathcal{E}}}
\newcommand{\pf}{p_{\mathcal{F}}}
\newcommand{\pef}{p_{\mathcal{E},\mathcal{F}}}
\newcommand{\Videal}{\mathcal{V}_{\mathrm{ideal}}}
\newcommand{\Videalhat}{\hat{\mathcal{V}}_{\mathrm{ideal}}}
\newcommand{\Ucalhat}{\hat{\mathcal{U}}}
\newcommand{\Ucald}
{\mathcal{U}^\dagger}
\newcommand{\Lcalonekt}{\hat{\widetilde{\mathcal{L}}}_k}
\newcommand{\Lcalonenonet}{\hat{\widetilde{\mathcal{L}}}^{1-\mathrm{body}}}
\newcommand{\Lcalonentwot}{\hat{\widetilde{\mathcal{L}}}^{2-\mathrm{body}}}
\newcommand{\Lcalonennt}{\hat{\widetilde{\mathcal{L}}}^{n-\mathrm{body}}}
\newcommand{\Lcalonenkt}{\hat{\widetilde{\mathcal{L}}}^{k-\mathrm{body}}}
\newcommand{\Lcalonenmt}{\hat{\widetilde{\mathcal{L}}}^{m-\mathrm{body}}}
\newcommand{\Lcalonent}{\hat{\widetilde{\mathcal{L}}}}
\newcommand{\Lcalonethreet}{\hat{\widetilde{\mathcal{L}}}}
\newcommand{\Hhat}{\hat{H}}
\newcommand{\Ajhat}{\hat{A}_j}
\newcommand{\Ajhatd}{\hat{A}_j^\dagger}
\newcommand{\Aonehat}{\hat{A}_1}
\newcommand{\Aonehatd}{\hat{A}_1^\dagger}
\newcommand{\Aonetwohat}{\hat{A}_{1,2}}
\newcommand{\Aonetwohatd}{\hat{A}_{1,2}^\dagger}
\newcommand{\rhoideal}{\rho_{\mathrm{ideal}}}
\newcommand{\rhoactual}{\rho_{\mathrm{actual}}}
\newcommand{\rhoapprox}{\rho_{\mathrm{approx}}}
\newcommand{\rhoe}{\rho_{\mathcal{E}}}
\newcommand{\rhof}{\rho_{\mathcal{F}}}
\newcommand{\rhoef}{\rho_{\mathcal{E},\mathcal{F}}}
\newcommand{\gopt}{g_{\mathrm{opt}}}
\newcommand{\tgate}{t_{\mathrm{pulse}}}
\newcommand{\dchi}{d_{\chi}}
\newcommand{\drho}{d_{\rho}}
\newcommand{\Dactapprox}{\mathcal{D}^{\mathrm{inf}}_{|\Psiin\rangle}(\rhoactual,\rhoapprox)}
\newtheorem{theorem}{Theorem}
\newtheorem{lemma}{Lemma}
\begin{document}

\title{Accurate and Honest Approximation of Correlated Qubit Noise}

\author{F. Setiawan}\email{setiawan.wenming@riverlane.com}
\affiliation{Riverlane Research Inc., Cambridge, Massachusetts 02142, USA}
\author{Alexander V. Gramolin}
\affiliation{Riverlane Research Inc., Cambridge, Massachusetts 02142, USA}
\author{Elisha S. Matekole}
\affiliation{Riverlane Research Inc., Cambridge, Massachusetts 02142, USA}
\author{Hari Krovi}
\affiliation{Riverlane Research Inc., Cambridge, Massachusetts 02142, USA}
\author{Jacob M. Taylor}\thanks{Present Address: Joint Center for Quantum Information and Computer Science,
University of Maryland-NIST, College Park, Maryland 20742, USA}
\affiliation{Riverlane Research Inc., Cambridge, Massachusetts 02142, USA}

\date{April 1, 2025}
\begin{abstract}
Accurate modeling of noise in realistic quantum processors is critical for constructing fault-tolerant quantum computers. While a full simulation of actual noisy quantum circuits provides information about correlated noise among all qubits and is therefore accurate, it is, however,  computationally expensive as it requires resources that grow exponentially with the number of qubits. In this paper, we propose an efficient systematic construction of approximate noise channels, where their accuracy can be enhanced by incorporating noise components with higher qubit-qubit correlation degree. To formulate such approximate channels, we first present a method, dubbed the cluster expansion approach, to decompose the Lindbladian generator of an actual noise channel into components based on interqubit correlation degree. 
We then generate a $k$-th order approximate noise channel by truncating the cluster expansion and incorporating noise components with correlations up to the $k$-th degree. We require that the approximate noise channels must be accurate and also ``honest", i.e., the actual errors are not underestimated in our physical models. As an example application,  we apply our method to model noise in a three-qubit quantum processor that stabilizes a $\dsl2,0,2\dsr$ codeword, which is one of the four Bell states. We find that, for realistic noise strength typical for fixed-frequency superconducting qubits coupled via always-on static interactions, correlated noise beyond two-qubit correlation can significantly affect the code simulation accuracy. Since our approach provides a systematic characterization of multi-qubit noise correlations, it enables the potential for accurate, honest and scalable approximations to simulate large numbers of qubits from full modeling or experimental characterizations of small enough quantum subsystems, which are efficient yet still retain essential noise features of the entire device.

\end{abstract}
\maketitle 

\indent

\section{Introduction}
Noise presents the major stumbling block in constructing a scalable quantum information processing device. To protect quantum information against noise such that error rates are small enough for practical quantum advantage in large-scale quantum processors, one needs to implement quantum error correction (QEC)~\cite{Shor1995Scheme,shor1996fault,Calderbank1996Good,devitt2013quantum}, where a logical qubit is encoded using many physical qubits. The basic idea behind QEC is that the logical error rate can be made arbitrarily small by increasing the number of physical qubits provided that the physical-qubit error rate is below a certain  threshold~\cite{aharonov1997fault,Gottesman1998Theory,preskill1998reliable,knill1998resilient}. Since the assessment of QEC performance is based on the assumption of the underlying noise model,  
accurate noise modelings and characterizations of quantum processors are absolutely essential for designing reliable and practical QEC. Moreover, faithful information on noise in realistic quantum devices is also crucial for devising more noise-resilient qubit architectures and hardware-efficient QEC codes with better threshold and smaller overhead. For example, for devices with biased Pauli noise, cat qubits~\cite{Guillard2019Repetition,Darmawan2021Practical,Chamberland2022Building} and the XZZX surface code~\cite{bonilla2021xzzx} are favorable as they possess outstanding resilience against such noise.

Knowledge about noise in quantum devices can be obtained by doing full simulations or experimental characterizations, e.g., gate set~\cite{blume2017demonstration,nielsen2021gate,proctor2020detecting}, quantum state and process tomography~\cite{paris2004quantum,howard2006quantum}, Lindblad tomography~\cite{Samach2022Lindblad} or randomized benchmarking~\cite{emerson2005scalable,Magesan2011Scalable}, of the actual physical device, which supply details on the qubit state at a given moment or quantum process/channel due to a gate application of a certain duration. Such theoretical simulations or experimental characterizations,  however, require computational or experimental resources (such as number of measurements) that scale exponentially with the number of qubits, which makes them impractical for large quantum circuits. 
The question that naturally arises is how one can use the full characterization of noise in small quantum processors to accurately assess the performance of large quantum information processors?

\begin{figure}[t!]
\includegraphics[width=\linewidth]{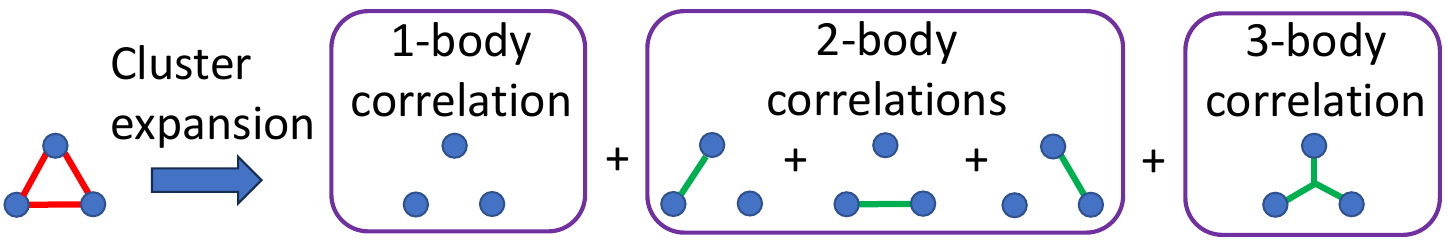}
	\caption{Schematics of the cluster expansion approach, a perturbative series method that decomposes the Lindbladian (noise generator) into terms based on the interqubit correlation degrees. Shown is an example of the cluster expansion approach applied to noise in a three-qubit quantum processor with triangular connectivity.}\label{fig:clusterexpansion}
 \end{figure}

One strategy used to simulate noise in large quantum systems is to assume a Pauli noise model~\cite{harper2020efficient,van2023probabilistic,Harper2023Learning} and construct approximate stochastic noise channels based on the Pauli twirling approximation~\cite{emerson2007symmetrized,Silva2008Scalable} which averages the noise channels with respect to a set of unitaries. This oversimplified model, however, often does not provide an accurate and honest description of noise in real quantum systems as it does not capture coherent noise, such as crosstalk, pulse miscalibration, etc., and non-Pauli errors, such as leakage, etc., which are the dominant noise in many realistic quantum devices. In this paper, we propose a general systematic approach to generate approximate noise channels beyond the Pauli noise model, which is based on actual noise data obtained from either realistic simulations or experimental characterizations of subsystems of the quantum device. In particular, we can obtain the actual noise data via either Lindbladian dynamics simulation or experimental characterizations such as Lindblad tomography~\cite{Samach2022Lindblad}, and process tomography~\cite{howard2006quantum} which supplies data that can be fitted to get the best-fit Lindbladian for the quantum channel~\cite{onorati2021fitting,onorati2023fitting}. In this paper, we focus on the former, i.e., using inputs from theoretical simulations of Lindbladian dynamics. 

We present an approach, dubbed the \textit{cluster expansion} method, to 
 decompose the actual noise into components according to their degree of qubit-qubit correlations [see Fig.~\ref{fig:clusterexpansion}]. A $k$-th order approximate noise channel is then generated by truncating the expansion and including noise components with qubit-qubit correlations up to the $k$-th degree. The accuracy of the approximate noise model can be improved (at the expense of reducing the efficiency) by including noise components with increasingly higher interqubit correlation degree. Further, we impose that our approximate noise channel must be constructed in an ``honest" manner, i.e., the approximate error does not underestimate the actual error.
Therefore, the aim of this paper is to provide a systematic method to construct  approximate noise models, $\Napprox$, such that
\begin{enumerate}
\item the action of $\Napprox$ can be made as “accurate” (faithful) as possible in representing the action of the actual channel $\Nactual$,
\item $\Napprox$ gives an “honest” representation of the noise in the actual system, and
\item the construction of $\Napprox$ for the full system can be done by combining together the input noise data from its smaller subsystems in a reliable way where there is no overcounting of any of the noise components.
\end{enumerate}

This paper is organized as follows. In Sec.~\ref{sec:Lindblad}, we begin by giving a short review of the Lindblad master equation and also define the normal-form of a noise channel. Subsequently in Sec.~\ref{sec:cluster}, we use the cluster expansion approach to decompose the Lindbladian into terms based on their qubit-qubit correlation degrees. In Sec.~\ref{sec:approximate}, we present ways to construct approximate noise channels using the decomposed Lindbladians and define the honesty and accuracy criteria of the approximate noise models. We then give a detailed application of our noise model for QEC in Sec.~\ref{sec:200code},  focusing on a specific example, namely the $\dsl 2,0,2\dsr$ code.  We give the conclusion and provide the outlook on future research directions in Sec.~\ref{sec:conclusions}. The appendices contain details on the superoperator representation, proof of the cluster expansion, example applications of the cluster expansion method, algorithms and details of pulse sequences for the $\dsl 2,0,2\dsr$ code, and results for simulating the $\dsl 2,0,2\dsr$ code in a three-qubit quantum processor with triangular connectivity.

\section{Lindblad noise model}\label{sec:Lindblad}
We assume that the evolution of a quantum processor interacting with its environment is well approximated by the Lindblad master equation~\cite{lindblad1976generators,gorini1976completely}: 
\begin{equation}\label{eq:Lindblad}
\frac{d\rho(t)}{dt} = \Lcal\rho(t),
\end{equation} 
where $\rho$ is the system density matrix and the Lindbladian $\Lcal$ is given by~\cite{lindblad1976generators,gorini1976completely}
\begin{equation}\label{eq:Lcal}
\Lcal(\rho) = -i[\Hhat,\rho] +\sum_{j}\gamma_j\left(\Ajhat \rho \Ajhatd -\frac{1}{2}\Ajhatd\Ajhat\rho - \frac{1}{2}\rho\Ajhatd\Ajhat \right),
\end{equation} 
with $[\cdot,\cdot]$ denoting the commutator. Here the Hamiltonian $\Hhat$ describes the unitary dynamics of the system, which can include coherent errors resulting from pulse miscalibration, crosstalk due to always-on qubit-qubit interactions, etc. $\Ajhat$ is the $j$th Lindblad (jump) operator with a jump/dissipation rate $\gamma_j$ representing the incoherent noise (e.g., dephasing and relaxation) due to interactions between the system and environment. The Lindblad master equation can describe either Markovian or non-Markovian noise~\cite{groszkowski2023simple,fleming2012non,Gulacsi2023signatures}. The resulting channel (propagator) that evolves the system for a duration $t$ can be written as
\begin{equation}\label{eq:Vt}
\Vt = \hat{T}\left[\exp \left(\int_{0}^{t}\mathcal{\Lcal}(t')dt' \right)\right],
\end{equation}
where $\hat{T}$ denotes the time-ordering operator.

In order to describe dynamics due to only the unwanted noise, we write the channel (propagator) in the \textit{normal form}. This normal form is defined as the noise channel with the unitary dynamics contribution $\Ucald$ of the ideal Hamiltonian being factored out, i.e.,
\begin{equation}\label{eq:Ttt}
\Nt =  \Vt \circ \Ucald,
\end{equation}
where $\Ucald(\rho) = U^\dagger\rho U$ is the ideal (noiseless) channel with $U$ being the ideal target unitary operator~\footnote{We remark that, given the target $U$, there exists a family of such forms
\begin{equation}
\mathcal{N}_x = \mathcal{U}^{-x} \circ\mathcal{V} \circ \mathcal{U}^{-1+x}
\end{equation}
for $x \in [0,1]$ with $\mathcal{U}^x(\rho) = U^x \rho U^{-x}$. In addition, summation over a function $f(x) dx$ also produces appropriate forms, e.g., $\mathcal{N}_f  = \int \mathcal{N}_x f(x) dx$ with $\int f(x) dx = 1$ being the only requirement to ensure that $\mathcal{N}_f$ is the identity channel when there is no noise. There can be circumstances where the use of such $\mathcal{N}_f$ forms may be desirable, due to averaging or simplification for the appropriate setting, but here we fix $\mathcal{N}_0$ as the \emph{normal} form.}.

If the Hamiltonian or Lindblad operators are time-dependent then the Lindbladian $\Lcal$ and hence, the noise channel are also time-dependent. We capture the effect of this time-dependent dynamics by using $\Lcalt$, which is an effective dimensionless time-independent Lindbladian that serves as a generator for the system evolution from the beginning to the end of the application of the noise channel. In the superoperator representation~\cite{navarrete2015open} (see Appendix~\ref{sec:superop}), it is given by  
\begin{equation}\label{eq:Lcaleff}
\Lcalthat = \mathrm{ln}(\Nthat),
\end{equation}
where $\Lcalthat$ is the superoperator representation of the Lindbladian $\Lcalt$ and $\Nthat$ is the superoperator representation of the normal-form noise channel $\mathcal{N}$ as defined in Eq.~\eqref{eq:Ttt}. In general, the matrix logarithm in Eq.~\eqref{eq:Lcaleff} yields an infinite number of branch cuts. However, for small enough noise strength, the normal-form noise channel is close to the identity channel and hence the zeroth branch cut of the matrix logarithm taken in Eq.~\eqref{eq:Lcaleff} gives a physical Lindbladian, i.e., the Lindbladian where all its eigenvalues have non-positive real parts.

\section{Lindbladian decomposition using cluster expansion}\label{sec:cluster}
Quantum circuits are usually transpiled into one- and two-qubit gates. Noise associated with the gate operations are then expected to have significant effects only in the neighborhood of the qubits where the gates are applied. As a result, we can simplify the noise model by first decomposing the noise into locally correlated noise and incorporating only the subset of noise that substantially affects the gates performance.

To separate and quantify the contributions due to noise with different degree of correlations, we present a 
 method akin to the cluster expansion approach~\cite{mayer1941molecular,kira2011semiconductor} used for  interacting closed many-body systems. The cluster expansion approach is a perturbative series expansion which sums over the contributions from clusters with different size; see Fig.~\ref{fig:clusterexpansion}. Instead of doing a power-series expansion over the partition function as in the standard cluster expansion approach used to treat interactions in closed many-body systems~\cite{mayer1941molecular,kira2011semiconductor}, our cluster expansion approach decomposes the Lindblad generator $\Lcalonent$ of an actual noise channel into sum of   components with different degree of qubit-qubit correlations, i.e.,
\begin{equation}\label{eq:clusterexpansion}
\Lcalonent = \Lcalonenonet + \Lcalonentwot + \cdots +\Lcalonennt,
\end{equation}
where $\Lcalonent$ denotes the exact Lindbladian of the whole system which comprises $n$ number of qubits. Here, 
\begin{align}\label{eq:decomposition}
\Lcalonenmt &= \sum_{S_m } \Lcalonent^{S_m}
\end{align}
is the component of the Lindbladian that describes noise with $m$-th correlation degree. The sum used to evaluate the $m$-th order correlated Lindbladian $\Lcalonenmt$ in Eq.~\eqref{eq:decomposition} is performed over all subsets of size $m$ in the set $S= \{1,2,\cdots, n\}$, i.e., $ S_m  \subseteq S$ such that $|S_m| = 
m$. Each of the $m$-th order Lindblad terms is given by the following recurrence relation:
\begin{equation}\label{eq:Lsmdef}
 \Lcalonent^{S_m} = \left( \frac{1}{d^{|\bar{S}_m|}}\mathrm{Tr}_{\bar{S}_m} \left[\Lcalonent \right]\otimes \mathds{1}_{\bar{S}_m} \right) - \sum_{R\subsetneq S_m} \Lcalonent^{R}. 
\end{equation}
  Here $\bar{S}_m = \{x \in S: x  \notin S_m\}$ denotes the complement subset of $S_m$, where $S_m$ is the subsystem containing $m$ qubits on which the noise correlation is calculated. As proven in Theorem~\ref{theorem_one} of Appendix~\ref{sec:proofrecur}, the above definition of $\Lcalonent^{S_m}$ [Eq.~\eqref{eq:Lsmdef}] guarantees that any Lindbladian can always be decomposed in terms of $\Lcalonent^{S_m}$ as in Eqs.~\eqref{eq:clusterexpansion} and~\eqref{eq:decomposition}. Moreover, Eq.~\eqref{eq:Lsmdef} also ensures that the term $\Lcalonent^{S_m}$ indeed describes a pure $m$-th order noise correlation as it can be decomposed using identities on all qubits and generators that have non-trivial basis only on all of the qubits $j \in S_m$ (see Theorem~\ref{theorem_two} of Appendix~\ref{sec:proofrecur} for a proof). We note that our cluster expansion for decomposing the Lindbladian is general and can be applied to any quantum processes, even though in this paper we apply our cluster expansion only to a specific quantum process represented by a QEC circuit.

As shown in Eq.~\eqref{eq:recur}, $\Lcalonent^{S_m}$ is calculated by first doing a partial trace of the Lindbladian of the whole $n$-qubits system over the $(n-m)$ qubits in the complement subset $\bar{S}_m$ which live in the Hilbert space with dimension $d^{(n-m)}$.  Here, $d$ is the number of energy levels in the qubit, where we use $d = 2$ for the simulation in this paper.  The tensor product with the identity $\mathds{1}_{\bar{S}_m}$ in the $(n-m)$-qubit subspace after the trace operation is to ensure that the dimension of the Lindblad component is the same as its initial dimension. The result after the partial trace gives an effective Lindbladian that describes noise in an $m$-qubits subsystem where it includes all noise components with correlation degrees up to the $m$-th order. Therefore, to get the $m$-th order correlated noise term $\Lcalonent^{S_m}$, we need to subtract all the lower ($m'<m$)-th order correlated noise terms from the partial-trace result, where the subtraction is over all proper subsets of $S_m$ ($R\subsetneq S_m$), i.e., all  subsets of $S_m$ that are not equal to $S_m$. 

In order to connect to experiments where noise characterization is done by evolving the system through a quantum channel $e^{\Lcalonent}$, we need to express $\Lcalonent^{S_m}$ in terms of $e^{\Lcalonent}$. In the small-noise regime where $||\Lcalonent|| \leq \varepsilon < 1/2$, we can approximate $e^{\mathcal{\Lcalonent}} \approx \mathds{1} + \Lcalonent$ and $\mathrm{ln} (\mathds{1}+ \Lcalonent) \approx \Lcalonent$. We note that the error in making these approximations scales only with $\varepsilon^2$, which is of higher order than the magnitude of the noise $||\Lcalonent||$ that we would like to approximate. Therefore, we can rewrite Eq.~\eqref{eq:Lsmdef} as (see Appendix~\ref{sec:proofrecur} for a rigorous proof):
\begin{align}\label{eq:recur}
\Lcalonent^{S_m}& \approx \mathrm{ln}\left( \mathrm{Tr}_{\bar{S}_m} \left[e^{\Lcalonent }\left(\mathds{1}_{S_m}\otimes\rho_{\bar{S}_m} \right)\right]\otimes \mathds{1}_{\bar{S}_m} \right) - \sum_{R\subsetneq S_m} \Lcalonent^{R} \nonumber\\
&\approx \mathrm{ln}\left( \mathrm{Tr}_{\bar{S}_m} \left[e^{\Lcalonent }\left( \mathds{1}_{S_m}\otimes\frac{\mathds{1}_{\bar{S}_m}}{{d}^{(n-m)}} \right)\right]\otimes \mathds{1}_{\bar{S}_m} \right)   \nonumber\\
&\hspace{0.5cm}- \sum_{R\subsetneq S_m} \Lcalonent^{R}.
\end{align}
Here, the density matrix $\rho_{\bar{S}_m}$ of the ($n-m$) qubits in the subset $\bar{S}_m$, the complement subset to the qubit subset of interest, is taken to be a fully mixed state; this is a valid assumption for QEC circuits, the focus of our paper, as noise in data qubits makes the syndrome qubits undergo ergodic evolutions into all possible states.  To further add to the understanding and for the ease of implementation, we provide an algorithm to compute the recurrence relation of the cluster expansion in Algorithm~\ref{algo_recur}.
\begin{algorithm}\small
\caption{$m$-th-order Lindbladian $\Lcalonent^{S_m}$ extraction from a quantum channel $\Nthat$}\label{algo_recur}
\Input{\begin{enumerate}
\item A quantum channel  $\Nthat$
\item an integer $n$ = total number of qubits in the system
\item an integer $m$ = number of qubits in the subsystem $S_m$ of interest
\item $m$ integers: $S_m = \{q_1, \cdots, q_m\}$ denoting qubit indices in the subset $S_m$ of interest
\item an integer $d$ = number of levels in the qubit
\end{enumerate}}
\Output{$\Lcalonent^{S_m}$: the Lindbladian term describing a pure $m$-th body correlation in the $S_m$ subsystem consisting of $m$ ($m<n$) qubits.}
\Fn{\FMainone{$\Nthat$, $n$, $m$, $S_m$, $d$}}{
\For{$i \gets 1$ \KwTo  $m$}{
Form all $R_i \subseteq S_m$ where $|R_i| = i$\\
\tcp{Initialize $\Lcalthat^{R_i}$ by partial-tracing $\Nthat$ over $\bar{R}_i$}
$\Lcalthat^{R_i} \gets \mathrm{ln}\left(\frac{1}{d^{n-i}}\mathrm{Tr}_{\bar{R}_i} \left[\Nthat\right]\otimes \mathds{1}_{\bar{R}_i} \right)$
\\
\For{$Q\subsetneq R_i$}{
\tcp{Subtract all the lower order Lindblad terms}
$ \Lcalthat^{R_i} \gets \Lcalthat^{R_i}- \Lcalthat^Q$}
Store $\Lcalthat^{R_i}$
}
\Return $\Lcalthat^{S_m}$}
\end{algorithm}

\section{Approximate noise model}\label{sec:approximate}
\subsection{Construction}\label{sec:construction}
Using the above decomposed Lindbladians, we can now ask the question: can we truncate the total Lindblad term $\Lcalonent$ by considering only terms up to the $k$-th order, where $k \leq n$? As we show in the rest of this work, we answer this in the affirmative. Specifically, we consider the $k$-body approximation of the Lindbladian $\Lcalonekt$ by ignoring all higher-order terms ($m>k$), defining:
\begin{equation}\label{eq:truncate}
\Lcalonekt = \Lcalonenonet + \Lcalonentwot + \cdots +\Lcalonenkt.
\end{equation}
We remark that, in general, each term in $\Lcalonekt$ is not guaranteed to be a physical Lindbladian. This means that $\Lcalonekt$ may also not be a physical Lindbladian since it can comprise a sum of both positive and negative Lindblad terms [see Eqs.~\eqref{eq:decomposition} and~\eqref{eq:recur}]~\footnote{While the term $\Lcalonent^{S_m}$ in Eq.~\eqref{eq:recur} is not guaranteed to be a physical Lindbladian, the first term of the right-hand side of Eq.~\eqref{eq:recur}, i.e.,  $\mathrm{ln}\left( \mathrm{Tr}_{\bar{S}_m} \left[e^{\Lcalonent }\left(\mathds{1}_{S_m}\otimes\rho_{\bar{S}_m} \right)\right]\otimes \mathds{1}_{\bar{S}_m} \right)$, is always a physical Lindbladian as it describes noise in an $m$-qubits subsystem.}. However, all the 1-body terms $\Lcalonent^{S
_1}$ and hence $\Lcalonenonet$ are guaranteed to be physical Lindbladians. Note that the way we calculate each term in $\Lcalonekt$, as defined in Eqs.~\eqref{eq:decomposition} and~\eqref{eq:recur}, guarantees that the summation in Eq.~\eqref{eq:truncate} does not over- or under-count the expected error, as we showcase with several examples in Appendix $\ref{app:examples}$. 

To generate the approximate noise channel, we can in principle exponentiate the $k$-body approximate Lindbladian $\Lcalonekt$. However, for sufficiently small noise strength, we can further simplify the computation for the approximate noise channel by making a Trotter approximation. Combining the  truncation of the Lindbladian cluster-expansion and the Trotter approximation, we write the $k$-th order approximate noise channel as
\begin{align}\label{eq:Napproxm}
&\Napproxhatk(g)\equiv \prod_{S_1} \exp(g\Lcalonent^{S
_1}) \cdots \prod_{S_k}\exp(g\Lcalonent^{S_k}),
\end{align}
where $g$ is the gain factor~\footnote{In general, one may have to use different gain factors for different decomposed components of the Lindbladian to get a completely-positive and trace-preserving (CPTP) approximate noise channel that is honest and accurate. However, for our case of [[2,0,2]] code in 3-qubit processors, we find that we can use the same gain factor for all the Lindbladian components to obtain honest and accurate CPTP approximate noise channels.}, a free parameter to be optimized in order to make the approximation ``honest" and as accurate as possible (see Sec.~\ref{sec:honesty} for details on the honesty and accuracy criteria). We use this Trotterized form in what follows as it allows us to reason about noise in the context of more complex quantum circuits, where each noise component can be replaced by a quantum operation occurring over a subset $S_m$ with $m \leq k$. 

We note that our approximate noise model can be scaled to larger systems even when there is crosstalk between all connected pairs of qubits as long as the effect due to the crosstalk between a pair of qubit does not extend across the whole system and is significant only within a region near the pair of qubit where the crosstalk acts. This is because the cluster expansion can be truncated at a certain depth where the small amplitude of higher-order terms, i.e.,  terms with  correlation degree greater than $k$, allows us to entirely neglect these higher-order residual errors. 

\subsection{Honesty and accuracy}\label{sec:honesty}
To quantify the performance of the approximate noise model, we calculate the distances between ideal (noiseless), actual, and approximate noise channels, which quantify the distinguishability between the channels. Based on these distances, we can characterize two properties of the approximate noise channels: honesty and accuracy~\cite{Magesan2013Modeling,Puzzuoli2014Tractable,Mauricio2015Comparison}. 

An approximate noise channel is honest if it does not underestimate the deleterious effect of the actual noise. On the other hand, the accuracy of an approximate channel refers to how closely it can emulate the impact of the actual noise. More explicitly, if we consider the ideal (noiseless) $\Videal$, actual $\Vactual$, and approximate  $\Vapprox$ target channels map an initial pure state $|\Psiin\rangle$ into the $\rhoideal$, $\rhoactual$, and $\rhoapprox$ density matrices, respectively, we can then define the honesty and accuracy criteria of the approximate noise channel $\Vapprox$ in terms of the distance metrics $\Dcal$ between these density matrices as  
\begin{align}
 \mathrm{honest\,if\hspace{0.2cm}} &\frac{\Dcal(\rhoideal,\rhoapprox)}{\Dcal(\rhoideal,\rhoactual)} \geq 1, \label{eq:honesty}\\
\mathrm{accurate\,if\hspace{0.2cm}\,}&\frac{\Dcal(\rhoideal,\rhoactual)}{\Dcal(\rhoactual,\rhoapprox)} \geq 1,  \label{eq:accuracy}
\end{align}
for every input pure state $|\Psiin\rangle$. 

For ease of understanding, we show in Fig.~\ref{fig:distance} the schematics of the honesty and accuracy criteria of approximate noise models. For the approximate noise model to be honest, its density matrix must lie outside the ball whose radius is defined by the distance between $\rhoideal$ and $\rhoactual$ (the shaded ball shown in Fig.~\ref{fig:distance}). This means that an honest approximate noise model should not underestimate the actual noise.  On the other hand, for the approximate noise model to be accurate, it must be as close as possible to the actual model; in other words, the distance between them (the green dashed line in Fig.~\ref{fig:distance}) must be as small as possible. This implies that for an approximate noise model to be accurate, the error of its approximation should not be larger than the noise that it approximates. Consequently, as shown in Fig.~\ref{fig:distance}, accurate and honest approximate noise models must lie on the green arc outside the shaded ball but close to the actual model.

\begin{figure}[t!]
\centering
\includegraphics[width=0.4\linewidth]{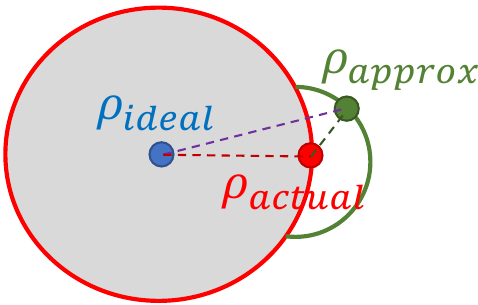}
	\caption{A schematic of distances between density matrices obtained from the ideal ($\rhoideal$), actual ($\rhoactual$) and approximate ($\rhoapprox$) models. Shaded area represents parameter regimes where  approximate noise models are not honest. Honest approximate noise models must lie outside the ball  defined by the distance between the ideal and actual models (shown by shaded ball). For the approximate noise models to be accurate, they must be close to the actual model. }\label{fig:distance}
\end{figure} 

To obtain the best approximate noise model, we find the optimal gain factor ($g = \gopt$) in $\Napproxhat$ [Eq.~\eqref{eq:Napproxm}] that yields the highest state-averaged accuracy ratio, i.e.,
\begin{equation*}
 \max_g\left\langle \frac{D(\rhoideal,\rhoactual)}{D(\rhoactual,\rhoapprox(g))}\right\rangle_{|\Psiin\rangle},
\end{equation*}
subject to the honesty constraint [Eq.~\eqref{eq:honesty}]. Here, $\langle \cdots \rangle_{|\Psiin\rangle}$ denotes the averaging over all the pure input states (for the case of QEC, which is the focus of our paper, these input states are all the eigenstates of the stabilizers).

While in this paper we choose to calculate the honesty and accuracy metrics by averaging over all pure input states which are the eigenstates of the stabilizers, for computational efficiency, one can instead choose a more scalable definition of honesty and accuracy metrics. For instance, we can use metrics which are defined using only the codewords of the QEC code (normally defined as the +1 eigenstates of the stabilizers). As an example, for the $\dsl 2,0,2 \dsr$ code which is the topic of the following sections, the codeword would just be the Bell state $|\Phi^+\rangle = (|00\rangle + |11\rangle)/\sqrt{2}$. This choice of honesty and accuracy metrics will give an exponential reduction in the amount of computation needed since the number of input states that needs to be taken into account is reduced exponentially. Furthermore, we can make the computation more efficient by using the honesty and accuracy ratios defined in terms of distances between syndrome measurement probability distributions. The reason we choose to use the honesty and accuracy metrics defined in terms of density matrices is because density matrices provide more information and also to conform with the way these two metrics are usually calculated in the literature, e.g., Refs.~\cite{Puzzuoli2014Tractable,Mauricio2015Comparison}. We emphasize that the applicability of our cluster expansion technique is not restricted to a particular definition of honesty and accuracy such as the one defined here. For computational efficiency, one can use a more scalable definition of honesty and accuracy metrics.

\begin{figure}
\includegraphics[width=\linewidth]{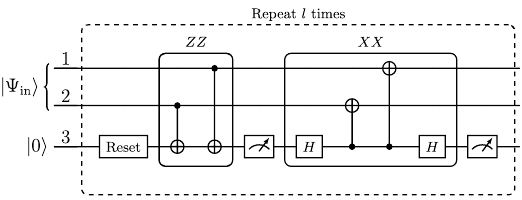}
  \caption{Quantum circuit for implementing the $\dsl2,0,2\dsr$ code. The code consists of two data qubits (qubits 1 and 2) and one syndrome qubit (qubit 3). The input state of the data qubits $|\Psiin\rangle$ is one of the Bell states while the syndrome qubit is initialized in state $|0\rangle$. The syndrome qubit is reset to $|0\rangle$ at the beginning of each round of the implementation of the $\dsl2,0,2\dsr$ code. The code has 2 stabilizers: $ZZ$ and $XX$. The $ZZ$ and $XX$ syndrome measurement values of each input Bell state $|\Psiin\rangle$ for each round of the implementation of an ideal (noiseless) $\dsl2,0,2\dsr$-code circuit are shown in Table~\ref{tab:insynd}.}
  \label{fig:200code}
\end{figure}

\begin{table}[t]\small
\centering
\renewcommand{\arraystretch}{2.2}
\begin{tabular}{|p{3.5cm} | p{3cm} |} 
 \hline\hline
 Input state $|\Psi_{\mathrm{in}}\rangle$ & ($ZZ$, $XX$) syndrome\\
 \hline 
$|\Phi^{+}\rangle = \dfrac{|00\rangle+|11\rangle}{\sqrt{2}}$ & (0,0)  \\ 
$|\Phi^{-}\rangle = \dfrac{|00\rangle-|11\rangle}{\sqrt{2}}$ & (0,1)  \\
$|\Psi^{-}\rangle = \dfrac{|01\rangle-|10\rangle}{\sqrt{2}}$ & (1,0)  \\
$|\Psi^{+}\rangle = \dfrac{|01\rangle+|10\rangle}{\sqrt{2}}$ & (1,1)  \\ 
\hline\hline
\end{tabular}
\caption{Data-qubits input states together with the syndrome measurement results of the ancilla qubit at each round of the implementation of an ideal (noiseless) $\dsl2,0,2\dsr$-code circuit.}\label{tab:insynd}
\end{table}

\section{Application to QEC: $\dsl2,0,2\dsr$ code}\label{sec:200code}
\subsection{Basics of $\dsl2,0,2\dsr$ code}
As an example application of our noise model, we consider applying it to the simplest QEC code, the $\dsl2,0,2\dsr$ code~\cite{corcoles2015demonstration}. As its name suggests, this code has 2 physical data qubits, zero logical qubits and a distance of 2. It has zero logical qubits since the number of stabilizers ($ZZ$ and $XX$) is the same as the number of physical data qubits. The code has only a single codeword which is any one of the four Bell states (see Table~\ref{tab:insynd}), which resides in a particular codespace with a fixed value (usually taken to be +1 eigenvalue) of the $ZZ$ and $XX$ parity operators. An arbitrary single-qubit error simply changes the codeword into another Bell state which is the opposite eigenstate of the $ZZ$ and/or $XX$ parity operators and thus can be detected via the $ZZ$ and $XX$ parity check measurements. For example, a bit-(phase-)flip changes the data-qubits' state to the eigenstate of $ZZ$ ($XX$) parity operator with an opposite eigenvalue, and a simultaneous bit- and phase-flip (a $Y$ error) changes it to the eigenstate with opposite eigenvalues for both the $ZZ$ and $XX$ parity checks.

The circuit for implementing the $\dsl2,0,2\dsr$ code is shown in Fig.~\ref{fig:200code}. Our circuit is simpler than the one implemented in Ref.~\cite{corcoles2015demonstration} where we use only one instead of two ancilla/syndrome qubits; in our circuit, the same ancilla qubit is used sequentially for the $ZZ$ and $XX$ parity checks. The input state of the circuit is taken to be any one of the four Bell states, where each of them has a unique syndrome measurement at each round of the implementation of an ideal (noiseless) $\dsl2,0,2\dsr$-code circuit, as shown in Table~\ref{tab:insynd}.

\subsection{Infidelity distance measure between different quantum channels}
Since noise in different quantum channels change the same initial data-qubits states into different quantum states, we can characterize noise in different quantum channels by first defining the fidelity between the data-qubits states resulting from two different quantum channels $\mathcal{E}$ and $\mathcal{F}$ as~\cite{jozsa1994fidelity,nielsen2010quantum}
\begin{equation}\label{eq:fidelity quantum}
F_{\mathrm{q}}(\rhoe,\rhof) = \left[\mathrm{Tr}\left( \sqrt{\sqrt{\rhoe}\rhof\sqrt{\rhoe}}\right)\right]^2.
\end{equation}
As different data-qubits states give rise to different probabilities of the syndrome-qubit measurement results $x$, we can similarly define the fidelity of the syndrome qubit measurements of different quantum channels in terms of their syndrome measurement probabilities [$\pe(x)$ and $\pf(x)$] as~\cite{nielsen2010quantum}
\begin{equation}
F_{\mathrm{c}}(\pe,\pf) = \left[\sum_{x}\sqrt{\pe (x) \pf (x)}\right]^2.
\end{equation}
Combining the fidelity measures characterized in terms of the data-qubits states and syndrome qubit measurements, we can define the infidelity distance between any two quantum channels used to implement QEC codes as
\begin{widetext}
\begin{align}\label{eq:infidelity}
\Dinfinput(\rhoe,\rhof) = 1-\left(\sum_{x\in \{0,1\}^{2l}}\bigg[\sqrt{\pe(x|\Psiin) \pf(x|\Psiin)}\times\mathrm{Tr}\left( \sqrt{\sqrt{\rhoe(x|\Psiin)}\rhof(x|\Psiin)\sqrt{\rhoe(x|\Psiin)}}\right)\bigg]\right)^2,
\end{align}
\end{widetext}
where $x \in \{0,1\}^{2l}$ are the syndrome measurement bit strings after the $l$-th round implementation of the $\dsl2,0,2\dsr$-code circuit and $|\Psiin\rangle$ is the data-qubits input state, which is one of the four Bell states. Here, $\rhoef(x|\Psiin)$ are the data-qubits density matrices resulting from the application of the quantum channels $\mathcal{E}$ and $\mathcal{F}$, respectively,  given the syndrome measurement result $x$ and a particular input state $|\Psiin\rangle$. The density matrices $\rhoef(x|\Psiin)$ are therefore associated with the corresponding probability distributions $\pef(x|\Psiin)$ of a syndrome measurement result $x$ given a particular input state $|\Psiin\rangle$. The infidelity values are given by $0 \leq \Dinfinput(\rhoe,\rhof) \leq 1$, where $\Dinfinput(\rhoe,\rhof) = 0$ means that the two density matrices are identical and  $\Dinfinput(\rhoe,\rhof) = 1$ indicates that the two density matrices are furthest apart from each other. 

In the context of QEC, the honesty and accuracy criteria [Eqs.~\eqref{eq:honesty} and~\eqref{eq:accuracy}] of our approximate noise models can then be defined in terms of the infidelity metrics given in Eq.~\eqref{eq:infidelity}. Specifically, we require these criteria to be satisfied for every input pure state $|\Psiin\rangle$ belonging to the eigenstates of the stabilizers (which are the four Bell states for our example of the $\dsl 2,0,2 \dsr$ code).

\subsection{Simulation}\label{sec:simulation}
We now analyze the implementation of our noise model for the $\dsl2,0,2\dsr$ code in a concrete physical 
 setup: a quantum processor consisting of fixed-frequency qubits coupled to each other via static transverse couplings. The system Hamiltonian is given by
\begin{equation}\label{eq:Htot}
\Hhat(t) = \Hq+ \Hint +\Hdr(t),
\end{equation}
where $\Hq$ is the bare qubit Hamiltonian, $\Hint$ is the qubit-qubit interaction term and $\Hdr(t)$ is the qubit drive term.

We consider the qubits to be a collection of two-level systems with the bare Hamiltonian (in units $\hbar = 1$)
\begin{equation}\label{eq:Hq}
\Hq = \sum_{j} \frac{\omegaqj}{2}\sigmazj,
\end{equation}
where $\sigmazj = |0\rangle_j {}_j\langle 0| - |1\rangle_{j} {}_j\langle 1|$ is the $j$-th qubit Pauli-$z$ operator and $\omegaqj$ is the $j$-th qubit frequency. The qubits are coupled to each other via always-on static transverse couplings with the Hamiltonian 
\begin{equation}\label{eq:Hint}
\Hint = \sum_{\langle j,k \rangle} J_{jk} (\sigmapj\sigmamk +\sigmapk\sigmamj),
\end{equation}
where the sum is over all pairs of interacting qubits. Here, $J_{jk}$ is the strength of the coupling between qubits $j$ and $k$, $\sigmap = |1\rangle\langle 0|$ and $\sigmam = |0\rangle\langle 1|$ are the qubit raising and lowering operators, respectively.  We consider two different connectivities: linear and triangle geometries (see Fig.~\ref{fig:geometry}), where the linear geometry has one less qubit-qubit coupling compared to the triangle geometry. 

\begin{figure}[t]
\centering
\includegraphics[width=0.6\linewidth]{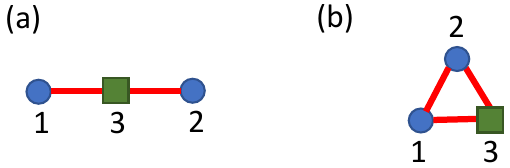}
	\caption{Connectivities of the 3-qubit processor used for implementing the $\dsl 2,0,2\dsr$ code: (a) Linear and (b) Triangle geometry. Circles represent data qubits and squares represent syndrome qubits. Numbers indicate the qubit labels as in Fig.~\ref{fig:200code}.}\label{fig:geometry}
\end{figure}

Gates are implemented by driving the qubits with control fields where the drive Hamiltonian is
\begin{align}\label{eq:Hdr}
\Hdr(t) &= \sum_{\substack{j\in \mathrm{driven}\\ \mathrm{qubits}}}\Omegadrj(t) \left(\sigmapj e^{-i(\omegadrj t+\phidrj)} + \mathrm{H.c.}\right).
\end{align}
Here $\Omegadrj(t)$, $\omegadrj$ and $\phidrj$ are respectively the envelope, frequency and phase of the control field that drives qubit $j$. The notation ``H.c.'' denotes the Hermitian conjugate of the preceding term. A single-qubit gate is implemented by driving the qubit with a control field at the qubit's characteristic frequency. Any single-qubit gate can be implemented in this way by choosing appropriate amplitude, duration and phase of the driving pulse. However, here we choose to realize single-qubit gates using sequences of $X$ and $Y$ rotations obtained using the $X$-$Y$ decomposition~\cite{nielsen2010quantum} of the gates. The two-qubit (CNOT) gate used in our simulation is a cross-resonance~\cite{Rigetti2010Fully,Chow2011Simple,Cruz2021Testing,cruz2023shallow} gate which is implemented by driving the control qubit at the frequency of the target qubit. To cancel the unwanted effect due to crosstalk in the cross-resonance gate, we apply an echoed version of the calibrated pulse sequence as described in Ref.~\cite{Sheldon2016Procedure}. 

\begin{figure*}
\includegraphics[width=\linewidth]{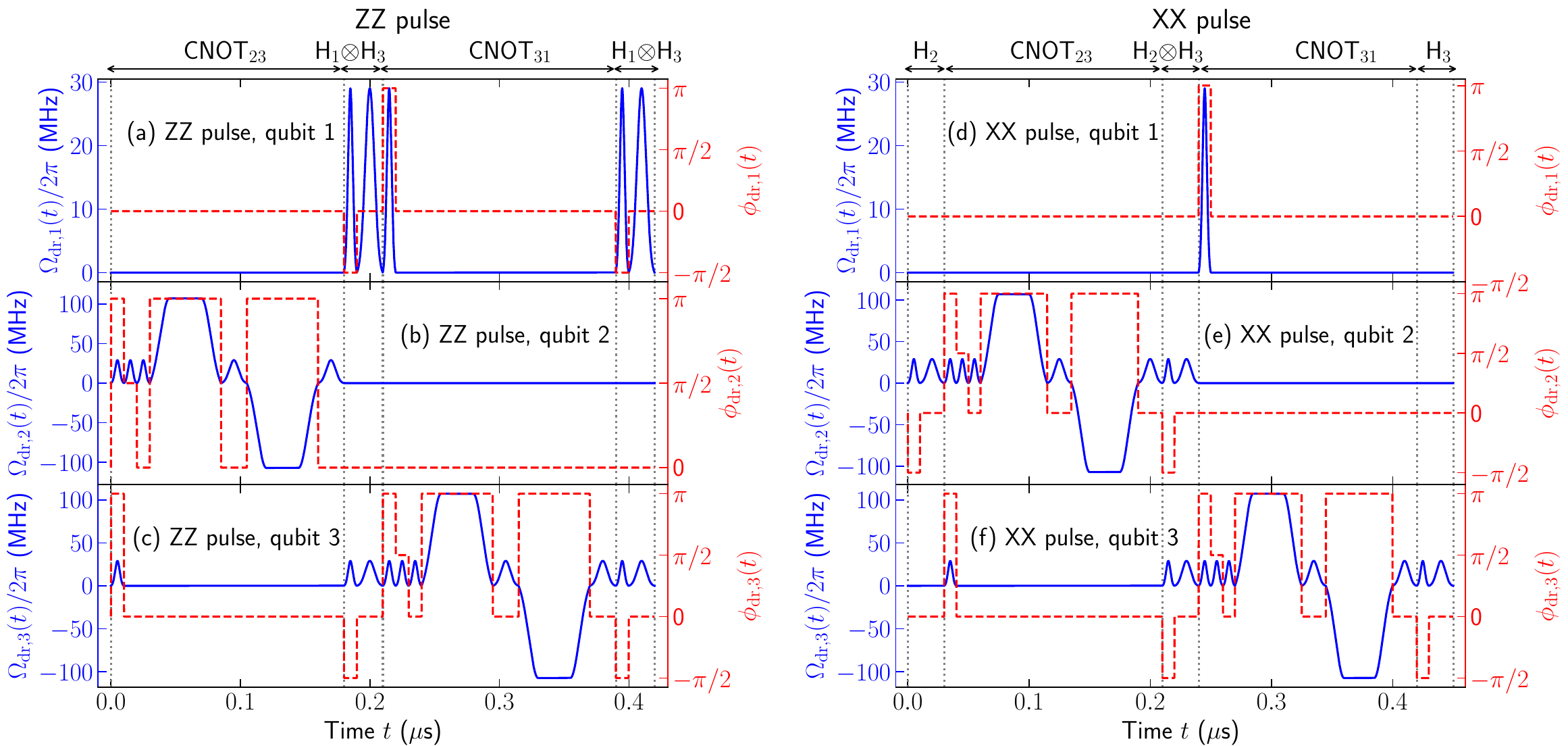}
	\caption{Amplitudes $\Omega_{\mathrm{dr},j}(t)$ and phases $\phi_{\mathrm{dr},j}(t)$ of the control pulses for the $ZZ$ [panels (a), (b), and (c)] and $XX$ [panels (d), (e), and (f)] stabilizers of the $\dsl2,0,2\dsr$ code implemented in a linear geometry. From top to bottom: pulse schedules for qubit $j=1,2,3$ (the qubit labels are the same as those in Fig.~\ref{fig:200code}). The pulse sequences consist of H$_a$ and CNOT$_{ab}$ gate pulses where H$_a$ denotes the Hadamard on qubit $a$ and CNOT$_{ab}$ denotes the CNOT gate with qubit $a$ being the control qubit and qubit $b$ being the target qubit. The Hadamard and CNOT gates in turn comprise two or more elementary pulses as discussed in Appendix~\ref{sec:pulse_sequence}.}\label{fig:pulse}
\end{figure*}

The pulse schedules of the $ZZ$ and $XX$ stabilizers, which consist of single- and two-qubit gate pulses, are shown in the left and right panels of Fig.~\ref{fig:pulse}, respectively; here we use Gaussian and Gaussian-square pulse envelopes for the single-qubit and two-qubit gates, respectively (see Appendix~\ref{sec:pulse_sequence} for details on the pulse sequences for the $ZZ$ and $XX$ parity checks). These pulse schedules are for a linearly-connected qubit geometry (the pulse schedules for the triangle geometry are not shown as they are similar to  those for the linear geometry). They are calibrated for the qubit parameters shown in Table~\ref{tab:params}; these parameters are typical for fixed-frequency superconducting qubits. For the linear geometry, our calibrated pulses yield average fidelities of $0.9896$ and $0.9888$ for the $ZZ$ and $XX$ stabilizers, respectively. Similar calibrated pulses for the triangle geometry yield average fidelities of $0.9870$ and $0.9832$ for the $ZZ$ and $XX$ stabilizers, respectively.

\renewcommand{\arraystretch}{0.9}
\begin{table}\small
\centering
\begin{tabular}{|p{3 cm} | p{4.5 cm}|} 
 \hline\hline
 Qubit frequencies &  Qubit-qubit couplings  \\  
 \hline
$\omega_{q_1}/2\pi$ = 4.8 GHz  & $J_{1,2}/2\pi$ = 4 MHz (Triangle)\\ 
& \hspace{1.05 cm} = 0 (Linear)\\
$\omega_{q_2}/2\pi$ = 5.2 GHz & $J_{2,3}/2\pi$ = 4 MHz\\ 
$\omega_{q_3}/2\pi$ = 5 GHz & $J_{1,3}/2\pi$ = 4 MHz\\ 
 \hline
Relaxation times & Dephasing times \\ 
\hline 
$T_{1,1}$ = 100 $\mu$s  & $T_{\varphi,1}$ = 200 $\mu$s\\ 
$T_{1,2}$ = 100 $\mu$s & $T_{\varphi,2}$ = 200 $\mu$s\\ 
$T_{1,3}$ = 100 $\mu$s & $T_{\varphi,3}$ = 200 $\mu$s\\ 
 \hline\hline
\end{tabular}
\caption{Qubit parameters: qubit frequencies $\omega_{q}$, qubit-qubit coupling strengths $J$, relaxation times $T_1$ and pure dephasing times $T_{\varphi}$.}\label{tab:params}
\end{table}

To simulate the gate dynamics which incorporates the effects of dissipation due to $T_{\varphi}$ dephasing  and $T_1$ relaxation, we use the Lindblad master equation:

\begin{align}\label{eq:Lindbladappendix}
\frac{d\rho(t)}{dt} &= -i[\Hhat(t),\rho(t)] + \sum_{j}  \frac{1}{\Tphij} \bigg(\sigmazj \rho(t) \sigmazj  - \rho(t)\bigg) \nonumber\\
&\hspace{0.2cm}+ \sum_j \frac{1}{\Tonej}\left(\sigmamj \rho(t) \sigmapj -\frac{1}{2}\left\{\sigmapj\sigmamj,\rho(t)\right\} \right) ,
\end{align}
where $\rho(t)$ is the density matrix of the whole system, $\Tphij$ and $\Tonej$ are the dephasing and relaxation times of qubit $j$, respectively. Here $\{\cdot,\cdot\}$ denotes the anticommutator. The Lindblad master equation simulation in this paper is done using the Python package QuTiP~\cite{Johansson2012Qutip,Johansson2013Qutip}. 

From the Lindblad master equation simulation, we obtain the actual channel or propagator $\Vactual$ of the $ZZ$ and $XX$ stabilizers which evolve the system from the beginning to the end of each stabilizer's protocol time. We then convert these propagators into their normal forms $\Nactualhat$ using Eq.~\eqref{eq:Ttt}. Afterward, we convert $\Nactualhat$ into the superoperator form and then take the matrix logarithm of $\Nactualhat$ in order to get the time-independent effective Lindbladians $\Lcalonent$ [Eq.~\eqref{eq:Lcaleff}]. We then decompose these effective Lindbladians $\Lcalonent$ using the cluster expansion method in Sec.~\ref{sec:cluster} and exponentiate the decomposed Lindbladians to get the normal-form of the approximate noise channels $\Napproxhat$ as detailed in Sec.~\ref{sec:construction}. To convert the approximate noise channels of the $ZZ$ and $XX$ stabilizers into the standard form  $\Vapproxhat$, we multiply the corresponding $\Napproxhat$ by the ideal (noiseless) target operation $\Ucalhat$, i.e., 
\begin{equation}
\Vapproxhat = \Napproxhat \, \Ucalhat.
\end{equation} 

We note that the effective Lindbladian $\Lcalthat$ [Eq.~\eqref{eq:Lcaleff}] can in general contain higher-order operators than the operators in the Hamiltonian or the dissipative jump operators. This is because the effective Lindbladian $\Lcalthat$ is calculated by taking the logarithm of the noise channel that evolves the system from the beginning to the end of the protocol. As shown in Eq.~\eqref{eq:Vt}, this noise channel is the time-ordered exponentiation of the usual Lindbladian as defined in Eq.~\eqref{eq:Lcal}. The time-ordered exponentiation  can be expressed using the Magnus expansion~\cite{magnus1954exponential,blanes2009magnus} as the exponentiation of infinite series of operators evaluated from nested commutators involving each term in the Hamiltonian and dissipative jump operators. Since some of the Hamiltonian and dissipation terms do not commute, the commutators give rise to higher-body operators in the noise channel which then translates into higher-body operators in the effective Lindbladian $\Lcalthat$. For our example of the 3-qubit processor, the effective Lindbladian contains correlated qubit jump operators, i.e., two- or three-body jump operators, resulting from the commutator of the qubit-qubit couplings in Eq.~\eqref{eq:Hint} and the individual qubit jump operators in Eq.~\eqref{eq:Lindbladappendix}. Moreover, it also comprises three-qubit coupling terms, arising from the commutator between two neighboring two-qubit coupling terms in Eq.~\eqref{eq:Hint}. 

One advantage of the cluster expansion proposed here is that it gives a simple recipe on how to decompose the effective Lindbladian into components based on interqubit correlation degrees without having to resort to the more complicated Magnus expansion technique as described above. Moreover, our cluster expansion has the benefit that it can be used even for the case where we do not know the exact Hamiltonian and/or dissipators in the system but are given the total effective Lindbladian $\Lcalthat$ of the system. This Lindbladian can be obtained experimentally using noise characterization techniques such as the Lindblad tomography~\cite{Samach2022Lindblad} or process tomography~\cite{howard2006quantum} whose data can be fitted to get the effective Lindbladian for the quantum channel~\cite{onorati2021fitting,onorati2023fitting}.  

We use the following procedure to implement the $\dsl2,0,2\dsr$ code (see Algorithm~\ref{algo_202} of Appendix~\ref{sec:algorithmtwo} for the pseudocode):
\begin{enumerate}
\item Choose a Bell state for the input state $|\Psiin\rangle$ of the data qubits. 
\item Reset the syndrome qubit to state $|0\rangle$.
\item Apply a selected error channel, actual $\Vactualhat$ or approximate $\Vapproxhat$ (for different gain factors), of the $ZZ$ stabilizer.
\item Apply a noiseless (perfect) projective measurement of the syndrome qubit in the $Z$ basis.
\item Apply a selected error channel, actual $\Vactualhat$ or approximate $\Vapproxhat$  (for different gain factors), of the $XX$ stabilizer.
\item Apply a noiseless (perfect) projective measurement of the syndrome qubit in the $Z$ basis.
\item Compute the infidelities between the data-qubits density matrices obtained from the noiseless (ideal) $\Videalhat$, actual $\Vactualhat$ and approximate $\Vapproxhat$ channels.
\item Repeat steps 2-7.
\end{enumerate}

We apply the above procedure to implement the $\dsl2,0,2\dsr$ code for two different connectivities: linear and triangle geometries (see Fig.~\ref{fig:geometry}). In order to understand the contribution of noise components with different degrees of correlations, for both geometries we perform simulations using two different approximate noise channels: second-order ($\Napproxhattwo$) and third-order ($\Napproxhatthree$) approximate noise channels [Eq.~\eqref{eq:Napproxm}].

 \begin{figure*}[t!]
\includegraphics[width=\linewidth]{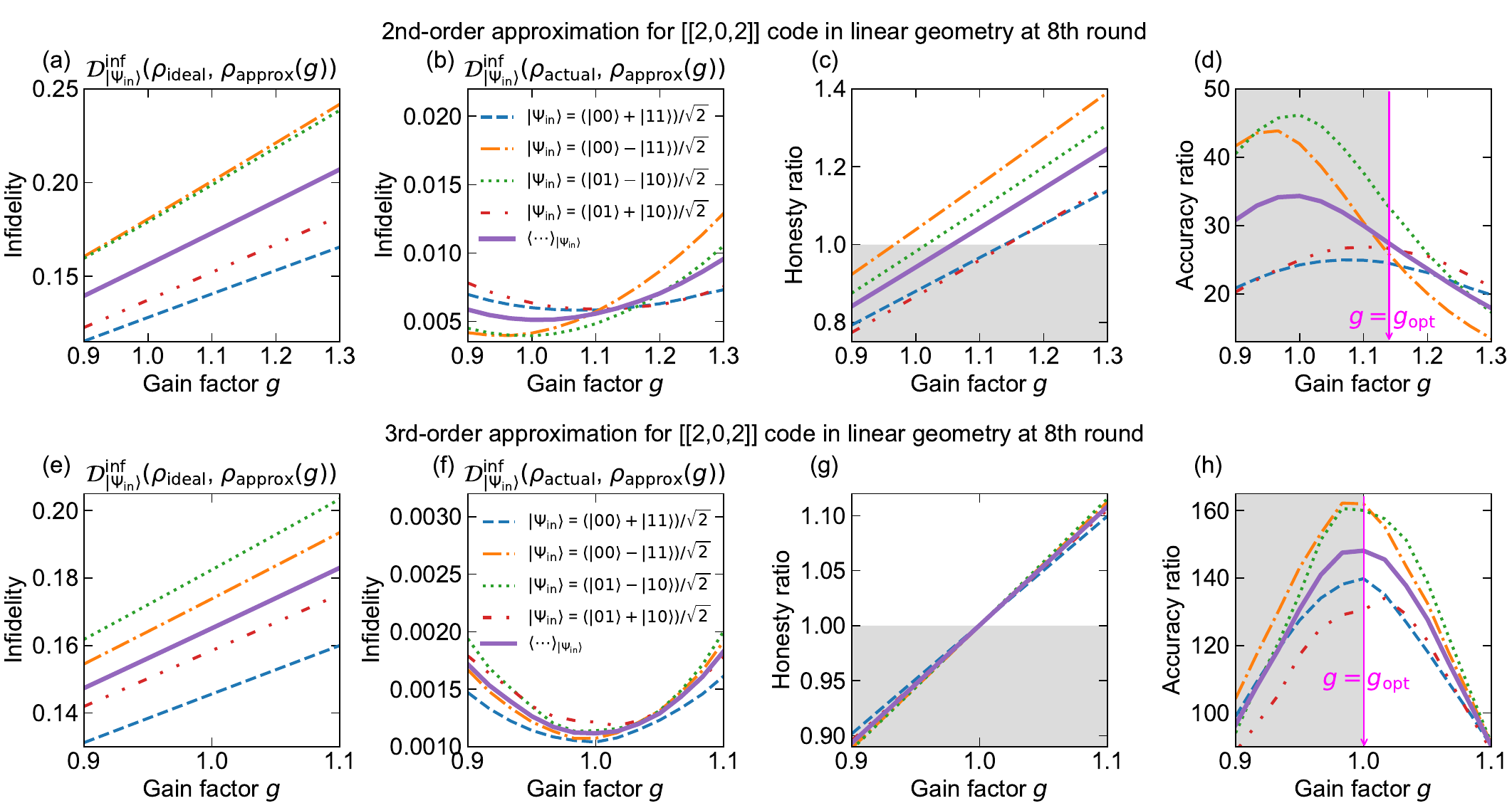}
	\caption{Results simulated for the linear connectivity using the second-order $\Napproxhattwo$ [upper panels: (a), (b), (c), and (d)] and third-order $\Napproxhatthree$ [lower panels: (e), (f), (g), and (h)] approximate noise channels. They are calculated at the 8th round implementation of the $\dsl2,0,2\dsr$ code for different input data-qubits states $|\Psiin\rangle$; the state-averaged results are denoted by $\langle \cdots\rangle_{|\Psiin\rangle}$ and shown as solid purple lines. (a,b,e,f) Infidelity between the data-qubits density matrices as a function of gain factor $g$: (a,e) Infidelity between ideal and approximate density matrices, (b,f) Infidelity between actual and approximate density matrices. (c,g) Honesty ratio as a function of gain factor $g$. (d,h) Accuracy ratio as a function of gain factor $g$. Magenta arrow denotes $g = \gopt$, i.e., the optimal value of $g$ that maximizes the state-averaged accuracy ratio subject to the honesty constraint [Eq.~\eqref{eq:honesty}], where $\gopt\simeq 1.14$  for the second-order and $g = \gopt \simeq 1.0005$ for the third-order approximate channel. Shaded areas in panels (c), (d), (g), and (h) denote regimes where approximate noise models are not honest (honesty ratio $< 1$).}\label{fig:linear_8round}
\end{figure*} 

\subsection{Results and discussion}\label{sec:results}

Figure~\ref{fig:linear_8round}  shows the simulation results for the $\dsl 2,0,2 \dsr$ code, computed using the second-order (upper panels) and third-order (lower panels) approximate noise channels. The results shown are for the 8th round implementation of the code in a linear geometry using different input data qubits states $|\Psiin\rangle$. Figures~\ref{fig:linear_8round}(a) and~\ref{fig:linear_8round}(e) show that the infidelity between the ideal and approximate data-qubits density matrices increases linearly with the gain factor $g$. This is because the gain factor amplifies the noise in the approximate noise channels. Panels (b) and (f) of Fig.~\ref{fig:linear_8round} show that the infidelities between the resulting actual and approximate data-qubits density matrices $\Dactapprox$ exhibit a nonmonotonic dependence on the gain factor $g$, where the minimum is at  $g \simeq 0.95-1.1$ and $g\simeq 1$ for the second-order and third-order approximate noise channels, respectively. 

From these infidelities, we calculate the honesty and accuracy ratios as shown in Figs.~\ref{fig:linear_8round} (c,g) and~\ref{fig:linear_8round}(d,h), respectively. The honesty ratio increases linearly with the gain factor $g$ where the honesty constraint (honesty ratio $\geq 1$) is satisfied for all the four input Bell states  only when $g \gtrsim 1.14$ and $g \gtrsim 1$ for the second-order and third-order approximate channels, respectively. On the other hand, the accuracy ratios have nonmonotonic dependences on the gain factor $g$ with its maximum occurring at $g \simeq 0.95-1.1$ for the second-order approximate noise channel and $g\simeq 1$ for the third-order approximate noise channel. The maximum state-averaged accuracy ratios subject to the honesty constraint are  obtained by choosing the optimal $g = \gopt$ in the honesty regime, where $\gopt\simeq 1.14$ and $\gopt\simeq 1.0005$ for the second- and third-order approximations, respectively. As shown in Fig.~\ref{fig:triangle_8round} of Appendix~\ref{sec:triangle}, the results simulated for the triangle geometry display similar qualitative behaviors as those for the linear geometry.

\begin{figure*}[t!]
\includegraphics[width=\linewidth]{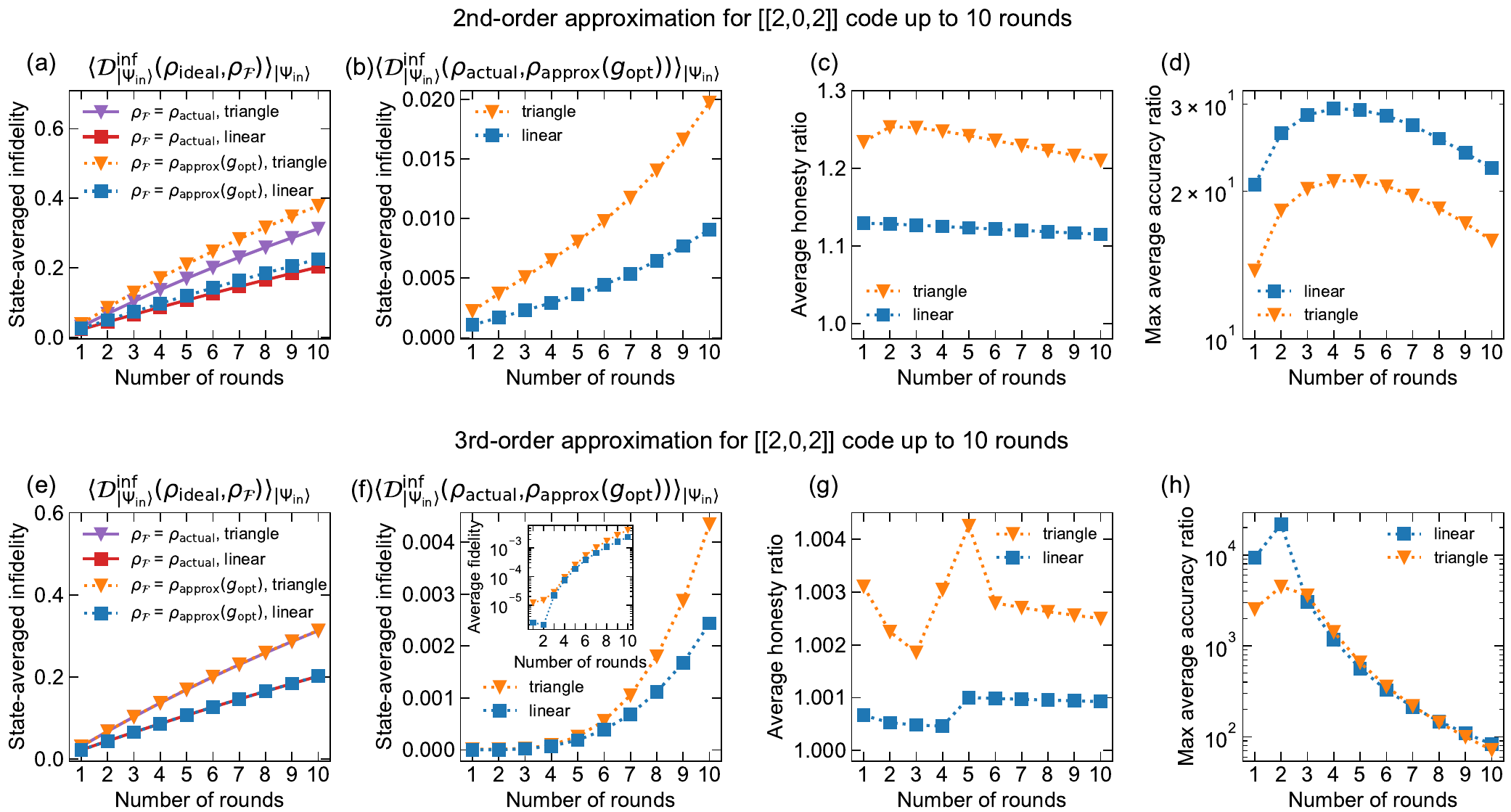}
	\caption{Results simulated by using the second-order $\Napproxhattwo(\gopt)$ [upper panels: (a), (b), (c), and (d)] and third-order $\Napproxhatthree(\gopt)$ [lower panels: (e), (f), (g), and (h)] approximate noise channels, where $\gopt$ is the optimal gain factor chosen to maximize the approximation accuracy subject to the honesty criterion [Eq.~\eqref{eq:honesty}]. The values of $\gopt$ for the second-order approximate noise channels are $\gopt \simeq 1.14$ for linear and $\gopt \simeq 1.2$ for triangle geometry. For the third-order approximation, $\gopt \simeq 1.0005$ for linear and $\gopt \simeq 1.002$ for triangle geometry. Panels (a,b,e,f): State-averaged infidelity between the data-qubits density matrices simulated by using (a,e) ideal and actual noise channels (solid lines),  ideal and approximate noise channels (dotted lines), (b,f) actual and approximate noise channels (Inset of panel f: Semilog plot). Panels (c,d,g,h): (c,g) state-averaged honesty ratio calculated using $\gopt$, (d,h) maximum state-averaged accuracy ratio calculated using $\gopt$.}\label{fig:honest_acc}
\end{figure*} 

Based on the simulations of the infidelities at each round of the $\dsl 2,0,2 \dsr$ code, we plot the state-averaged results for the second-order and third-order approximate noise channels as a function of round number in the upper and lower panels of Fig.~\ref{fig:honest_acc}, respectively. The approximate density matrices $\rhoapprox$ are calculated using the optimal gain factor ($\gopt$) [Eq.~\eqref{eq:Napproxm}], which makes the approximate channels honest and as accurate as possible. Here we show the results for both linear and triangular connectivities. 

Panels (a), (b), (e) and (f) of Fig.~\ref{fig:honest_acc} show the state-averaged infidelities of the data-qubits density matrices calculated after each round of the $\dsl2,0,2\dsr$ code, where the averaging is performed over the four different input Bell states. In Figs.~\ref{fig:honest_acc}(a) and~\ref{fig:honest_acc}(e), the infidelities between the data-qubits density matrices of the ideal ($\rhoideal$) and actual ($\rhoactual$) channels 
 are plotted using solid lines, while the  infidelities between the data-qubits density matrices of the ideal ($\rhoideal$) and approximate ($\rhoapprox$) channels are depicted by dashed lines. Figures~\ref{fig:honest_acc}(b) and ~\ref{fig:honest_acc}(f)  show the infidelities between the data-qubits density matrices of the actual ($\rhoactual$) and approximate ($\rhoapprox$) channels. These infidelities increase with the number of rounds because the noise decoherence effect accumulates as the number of rounds increases, which in turn increases the distance between the density matrices. The increase is larger for the triangle geometry compared to the linear connectivity because the triangle geometry has one more extra crosstalk channel due to the coupling between qubits 1 and 2 (see Fig.~\ref{fig:geometry}). 

The state-averaged honesty ratios for the approximate channels are plotted in panels (c) and (g) of Fig.~\ref{fig:honest_acc}. The results are calculated using the optimal gain factor ($\gopt$) which is the gain factor that maximizes the state-averaged accuracy ratios subject to the constraint that the approximate channel must be honest (i.e., honesty ratio $\geq 1$ [Eq.~\eqref{eq:honesty}]) for all four Bell input states. Using this optimal gain factor $\gopt$, we compute the state-averaged accuracy ratios, which are shown in panels (d) and (h) of Fig.~\ref{fig:honest_acc}. As seen in Figs.~\ref{fig:honest_acc}(d) and~\ref{fig:honest_acc}(h), the accuracy of the approximation first increases with the number of rounds and for a large enough round number ($>$ 5 for the second-order and $>$ 3 for the third-order approximation), the accuracy  decreases with the number of rounds due to the accumulation of the noise decoherence effect. The reason for the nonmonotonic behavior of the accuracy ratio as a function of number of rounds is because the distance between $\rhoideal$ and $\rhoactual$ increases roughly linearly with the number of rounds [Figs.~\ref{fig:honest_acc}(a) and~\ref{fig:honest_acc}(e)] while the distance between $\rhoactual$ and $\rhoapprox$ grows exponentially with the number of rounds [Figs.~\ref{fig:honest_acc}(b) and~\ref{fig:honest_acc}(f)]. Note that in going from the second-order to the third-order approximation, the accuracy ratio increases by $\sim 1-2$ orders of magnitude, which implies that the three-body correlated noise has significant effects on the actual simulation accuracy of the code. 

\section{Conclusions and outlook}\label{sec:conclusions}
We propose a generic perturbative series approach, based on the cluster-expansion method, for decomposing the Lindbladian noise in an actual noise channel into components based on the qubit-qubit correlation degrees. Furthermore, we also present a systematic way to construct honest and accurate approximate noise channels from the decomposed Lindbladians, where the accuracy (efficiency) increases (decreases) as we include correlated noise components with higher and higher correlation degrees. Our work provides a general technique to characterize multi-qubit correlated errors, which goes beyond the Pauli noise model~\cite{harper2020efficient,van2023probabilistic} and also commonly used approximate noise models that assume only one- and two-qubit noise. 

By applying the cluster expansion approach to a concrete physical setting, namely, fixed-frequency superconducting qubits coupled via static always-on interactions, we show that for realistic parameter sets common to this setup, correlated noise beyond two qubits have significant effects on the simulation accuracy of QEC codes. While here we focus on superconducting qubits, our technique can actually be applied to any kind of quantum computing platforms. Moreover, our method can be extended to more complex settings, such as multi-level qubit systems, where effects due to noncomputational states (e.g., leakage) can be taken into account, and also to larger number of qubits.

 We note that while in this paper we apply our method to approximate noise in a simple system of 3-qubit processor by using the input noise data of the whole system, our approach can also be used to construct the approximate noise channels of the whole system by using the input noise data of its smaller subsystems. In particular, our cluster expansion allows us to do this in a way that it would avoid the issue of overcounting some of the Lindbladian components that would appear in the naive approximation where the noise channel of the full system is obtained by simply multiplying together its subsystems' noise channels. For example, suppose we would like to approximate the noise in a 3-qubit processor given the experimental noise characterization or simulation done on its 2-qubit subsystems which supplies the input noise channels: $\Nthat^{\{1,2\}}$, $\Nthat^{\{1,3\}}$, and $\Nthat^{\{2,3\}}$. Since $\Nthat^{\{1,2\}} = \mathrm{exp} (\Lcalthat^{\{1\}} + \Lcalthat^{\{2\}}+\Lcalthat^{\{1,2\}})$, $\Nthat^{\{1,3\}} = \mathrm{exp}(\Lcalthat^{\{1\}} + \Lcalthat^{\{3\}}+\Lcalthat^{\{1,3\}})$, and $\Nthat^{\{2,3\}} = \mathrm{exp}(\Lcalthat^{\{2\}} + \Lcalthat^{\{3\}}+\Lcalthat^{\{2,3\}})$, if one naively approximates the noise of the full system as $\Napproxhattwo = \Nthat^{\{1,2\}}
\cdot\Nthat^{\{1,3\}}\cdot\Nthat^{\{2,3\}}$, the terms $ \mathrm{exp}(g \Lcalthat^{\{1\}})$, $ \mathrm{exp}(g \Lcalthat^{\{2\}})$, and $ \mathrm{exp}(g \Lcalthat^{\{3\}})$ would appear twice in $\Napproxhattwo$ and therefore would not be an accurate representation of the actual noise channel. 

To avoid the overcounting issue, we can first use our cluster expansion approach to extract from the subsystems' noise channels different Lindbladian components that describe various noise terms with different interqubit correlation degrees. Using these Lindbladian components, we then construct an approximate noise channel $\Napproxhat$ for the whole system by selectively ``stitching'' together these Lindbladian terms in a way that avoids the overcounting issue. For the example of approximating noise in a 3-qubit processor using the noise channels of its 2-qubit subsystems, our approach allows us to first extract from $\Napproxhat^{\{1,2\}}$, $\Napproxhat^{\{1,3\}}$, and $\Napproxhat^{\{2,3\}}$ the different Lindbladian components: $\Lcalthat^{\{1\}}$, $\Lcalthat^{\{2\}}$, $\Lcalthat^{\{3\}}$, $\Lcalthat^{\{1,2\}}$, $\Lcalthat^{\{1,3\}}$,  and $\Lcalthat^{\{2,3\}}$. As in Eq.~\eqref{eq:Napproxm}, the approximate noise channel for the 3-qubit system can be then obtained by selectively including the Lindbladian terms in the approximate noise channel such that none of them appears more than once, yielding
\begin{align}
\Napproxhattwo &= \mathrm{exp}(g \Lcalthat^{\{1\}})  \mathrm{exp}(g\Lcalthat^{\{2\}})  \mathrm{exp}(g\Lcalthat^{\{3\}}) \nonumber\\
 &\qquad\mathrm{exp}(g \Lcalthat^{\{1,2\}})  \mathrm{exp}(g\Lcalthat^{\{1,3\}})  \mathrm{exp}(g\Lcalthat^{\{2,3\}}).
\end{align}

We emphasize that the above approach can be extended to approximate the noise of larger systems consisting of $n$ qubits from the input noise channels of its $m$-qubit subsystems. The amount of computational resources needed to construct the approximate noise channels depends on the size of the subsystems ($m$) used in the simulation or experimental noise characterization that provides the input noise data for the cluster expansion.  The smaller the size of the subsystems used in the input noise characterization, the less accurate the constructed approximate noise channel but the more efficient the computation is. For scalability of our method to a large system consisting of $n$ qubits, one therefore needs to choose a small enough subsystem size $m$ which does not grow with the full system size $n$. The input noise data can then be obtained from simulations or experimental noise characterizations done on only at most $n \choose m $ number of subsystems, which require resources that scale only polynomially with $n$ and not exponential in $n$. 
Since in this paper we focus on setting up the formalism of the cluster expansion approach and showing, for clarity, how this method can be applied to a simple system, we leave the task of approximating the noise channel of a larger system from the characterization of its smaller subsystems to future works.

Our technique also has another appealing feature that it gives systematic characterization of the interqubit correlated-noise components order by order. This is because our cluster expansion gives as the output the various Lindbladian terms which one can selectively include or exclude in the construction of the approximate noise channels. The changes of the accuracy of the approximate noise channel as one include or exclude some of the Lindbladian terms can be quantified using the accuracy metric given in our paper or other more scalable metrics. This systematic order-by-order characterization of multi-qubit correlated noise allows us to quantify the smallest maximum degree of noise correlation  or smallest subsystem size in the input noise characterization that needs to be taken into account for a certain desired accuracy in the noise modeling of the full quantum system. Our approach therefore paves the way towards constructing an accurate, honest and scalable approximate simulation of large quantum devices. 

\begin{appendix}
\section{Superoperator representation}\label{sec:superop}
Quantum channels are completely positive and trace preserving (CPTP) linear maps that act on a $\drho \times \drho$ density matrices $\rho = \sum_{i,j=1}^{\drho}\rho_{ij}|e_i\rangle \langle e_j|$. We can represent quantum channels and their generators (i.e., the Lindbladian for our case here) in the superoperator representation where they become $\drho^2 \times \drho^2$ matrices acting on a vectorized $\drho^2 \times 1$ density operator $|\rho\rangle\rangle = \sum_{i,j=1}^{\drho}\rho_{ij}|e_i\rangle \otimes |e_j\rangle$. Operators that act on the
bra and ket now  transform into operators acting on different copies of
the Hilbert space $\mathcal{H}\otimes \mathcal{H}$, i.e., $\hat{A}\rho \hat{B} \rightarrow \hat{A} \otimes \hat{B}^T |\rho \rangle\rangle$, where the superscript $T$ indicates a matrix transpose. As a result, in the superoperator representation, the Lindblad master equation [Eq.\eqref{eq:Lindblad}] is written as $\partial_t|\rho(t)\rangle\rangle = \hat{\mathcal{L}}|\rho(t)\rangle\rangle$, where we use the  symbol $\;\hat{}\;$ to denote the superoperator representation, e.g., $\mathcal{L}\rightarrow\hat{\mathcal{L}}$.

Using the superoperator representation, we write the Lindbladian as
\begin{align}
\hat{\mathcal{L}} &= -i(\hat{H} \otimes \mathds{1} - \mathds{1}\otimes \hat{H}^T) \nonumber\\
&\hspace{0.5 cm}+\sum_{j}\gamma_j\left(\hat{A}_j \otimes \hat{A}_j^* -\frac{1}{2} \hat{A}_j^\dagger \hat{A}_j \otimes \mathds{1} -\frac{1}{2} \mathds{1} \otimes \hat{A}_j^T \hat{A}_j^* \right).
\end{align}
As a result, the action of a superoperator on a density matrix becomes simply a matrix-vector multiplication. Furthermore, the composition of superoperators is given by a product of the matrix representations of the superoperators. In the superoperator representation, a quantum channel is therefore given by the matrix exponentiation of its Lindbladian generator, i.e., $\hat{\Vt} = \hat{T}\left[\exp(\int_0^t\Lcalhat(t') dt')\right]$. At any given time $t$, the state of the system is then given by $|\rho(t)\rangle\rangle = \hat{\Vt}|\rho(0)\rangle\rangle$.

\section{Proof of the cluster expansion}\label{sec:proofrecur}
\begin{theorem}\label{theorem_one}
Let $\Lcalonent$ be the Lindbladian of a system comprising $n$ qubits where each of the qubits live in the Hilbert space $SU(d)$ and $\Lcalonent^{S_m} = \left( \frac{1}{d^{|\bar{S}_m|}}\mathrm{Tr}_{\bar{S}_m} \left[\Lcalonent \right]\otimes \mathds{1}_{\bar{S}_m} \right) - \sum_{R\subsetneq S_m} \Lcalonent^{R}$ be the Lindbladian term that describes a pure $m$-th body correlation among qubits in the subset $S_m$ where $|S_m| = m$ with $m \leq n$, then $\Lcalonent$ can be decomposed as $\Lcalonent = \sum_{m}^n \sum_{S_m} \Lcalonent^{S_m}$.  
\end{theorem}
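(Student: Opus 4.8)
The plan is to recognize that the recurrence defining $\Lcalonent^{S_m}$ is nothing but an inclusion--exclusion (Möbius) relation on the Boolean lattice of subsets of $S=\{1,\dots,n\}$, so that the claimed decomposition follows by a one-line telescoping once two structural facts about the underlying operation are pinned down. First I would isolate the ``trace-and-embed'' map $P_T[\Lcalonent]=\frac{1}{d^{|\bar T|}}\,\mathrm{Tr}_{\bar T}[\Lcalonent]\otimes\mathds{1}_{\bar T}$ acting on the superoperator, which replaces the action on the qubits outside $T$ by the normalized identity superoperator, and rewrite the defining relation as $\Lcalonent^{S_m}=P_{S_m}[\Lcalonent]-\sum_{R\subsetneq S_m}\Lcalonent^{R}$. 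Reading this as a definition by strong induction on $|S_m|$, with base case $S_m=\emptyset$ where the subtracted sum is empty, shows that every $\Lcalonent^{S_m}$ is well defined.

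The core identity is then obtained simply by transposing one term: for every $T\subseteq S$ the definition is equivalent to $P_T[\Lcalonent]=\sum_{R\subseteq T}\Lcalonent^{R}$, since $\{R\subsetneq T\}\cup\{T\}=\{R\subseteq T\}$. This is exactly the zeta transform of the family $\{\Lcalonent^{R}\}$, whose Möbius inverse is the recurrence itself, so no further induction is needed. I would then specialize to $T=S$: here $\bar S=\emptyset$, so $P_S$ traces over nothing and tensors with a scalar, i.e.\ $P_S[\Lcalonent]=\Lcalonent$, which immediately yields $\Lcalonent=\sum_{R\subseteq S}\Lcalonent^{R}$.

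To match the statement, which sums only over $m\geq 1$, the remaining step is to show that the empty-set (zero-body) term vanishes, $\Lcalonent^{\emptyset}=P_{\emptyset}[\Lcalonent]=\frac{1}{d^{n}}\mathrm{Tr}_{S}[\Lcalonent]\otimes\mathds{1}_{S}=0$. The point is that $\mathrm{Tr}_S$ here is the full marginalization of the generator, and since $\Lcalonent$ generates a trace-preserving map one has $\mathrm{Tr}[\Lcalonent(\rho)]=0$ for all $\rho$; equivalently, the fully reduced channel on the trivial subsystem is the identity, whose logarithm is zero. Hence $\Lcalonent^{\emptyset}=0$, and regrouping the surviving terms by cardinality $m=|R|$ gives $\Lcalonent=\sum_{m=1}^{n}\sum_{|S_m|=m}\Lcalonent^{S_m}$, which is Eqs.~\eqref{eq:clusterexpansion}--\eqref{eq:decomposition}.

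Since the algebra is a pure telescoping, the genuine content lies in the two structural facts about $P_T$, and this is what I expect to be the main obstacle: verifying that the conventions (the factor $d^{-|\bar T|}$, the partial trace over the superoperator indices, and the embedding $\otimes\,\mathds{1}_{\bar T}$) are chosen precisely so that $P_S$ is the identity, and that trace preservation forces the zero-body term to vanish. The subtle part is the careful bookkeeping of the superoperator partial trace, distinguishing the input from the output indices of $\Lcalonent$, which is where the normalization $1/d^{|\bar T|}$ and the maximally mixed complement state in Eq.~\eqref{eq:recur} enter; once that is fixed, both $P_S=\mathrm{id}$ and $\Lcalonent^{\emptyset}=0$ become immediate and the decomposition follows.
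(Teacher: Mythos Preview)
Your argument is correct and is essentially the same one-line telescoping the paper uses: the paper simply pulls out the top term $\Lcalonent^{S_n}$, substitutes its defining recurrence (so that the subtracted sum $\sum_{R\subsetneq S_n}\Lcalonent^{R}$ cancels against the remaining $\sum_{m}^{n-1}\sum_{S_m}\Lcalonent^{S_m}$), and observes that $\mathrm{Tr}_\emptyset[\Lcalonent]=\Lcalonent$. Your rewriting of the recurrence as the zeta relation $P_T[\Lcalonent]=\sum_{R\subseteq T}\Lcalonent^{R}$ and then specializing to $T=S$ is exactly this computation read in the other direction; the M\"obius framing is pleasant context but not a different method.

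The one place where you go beyond the paper is your explicit treatment of the empty-set term $\Lcalonent^{\emptyset}$. The paper's four-line proof simply leaves the lower limit of $\sum_m$ unspecified and lets the same convention appear on both sides of the cancellation, so the question of whether $\Lcalonent^{\emptyset}=0$ never arises. Your trace-preservation argument for its vanishing is the right physical point and makes the match with Eq.~\eqref{eq:clusterexpansion} (which starts at one-body) cleaner; just be aware that under the paper's \emph{exact} definition Eq.~\eqref{eq:Lsmdef} the object $P_\emptyset[\Lcalonent]$ is a normalized full superoperator trace, and showing it vanishes requires unpacking the convention for ``partial trace of a superoperator'' rather than invoking $\mathrm{Tr}[\Lcal(\rho)]=0$ directly---that latter identity is literally what makes the \emph{channel-level} formula Eq.~\eqref{eq:recur} give $\ln 1=0$ at $S_m=\emptyset$, which is the version of the argument you sketch. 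Either way this is a bookkeeping refinement, not a gap in your logic.
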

\begin{proof}
\begin{align}\label{eq:proofcompleteness}
\Lcalonent &= \sum_{m}^n \sum_{S_m} \Lcalonent^{S_m}\nonumber\\
&= \Lcalonent^{S_n} + \sum_{m}^{n-1} \sum_{S_m} \Lcalonent^{S_m}\nonumber\\
&= \mathrm{Tr}_{\emptyset}\left( \Lcalonent\right) - \sum_{m}^{n-1} \sum_{S_m} \Lcalonent^{S_m} + \sum_{m}^{n-1} \sum_{S_m} \Lcalonent^{S_m}\nonumber\\
&= \Lcalonent,
\end{align}
where in the third line, we have substituted in Eq.~\eqref{eq:Lsmdef} for $\Lcalonent^{S_n}$. 
\end{proof}

\begin{theorem}\label{theorem_two}
Let $\Lcalonent$ be the Lindbladian of a system comprising $n$ qubits where each of the qubits live in the Hilbert space $SU(d)$ and $\Lcalonent^{S_m} = \left( \frac{1}{d^{|\bar{S}_m|}}\mathrm{Tr}_{\bar{S}_m} \left[\Lcalonent \right]\otimes \mathds{1}_{\bar{S}_m} \right) - \sum_{R\subsetneq S_m} \Lcalonent^{R}$ be the component of the Lindbladian that describes the $m$-th order correlation among the qubits in the subset $S_m$, then $\Lcalonent^{S_m}$ can be expressed in terms of identities on all qubits and generators that have non-trivial basis only on all qubits $j \in S_m$, i.e., 
\begin{align}
  \Lcalonent^{S_m} &= c_{m}|w_{\{0,0,\cdots,0\}}|^2 (\mathds{1} \otimes \mathds{1}) \nonumber\\
	&+ \sum_{\substack{\{\alpha_{j}\}, \{\alpha'_j\}\\ \alpha_j+\alpha'_j \neq 0 \\ \mathrm{for}\, j \in S_m,\\ \alpha_j=\alpha'_j = 0 \\ \mathrm{for}\, j \notin S_m}}  w_{\{\alpha_j\}} w_{\{\alpha'_j\}}^*
\left(\bigotimes_j B^{\alpha_j}_j\right)  \otimes \left(\bigotimes_j\left[B_j^{\alpha_j'}\right]^*\right),
\end{align}
where $c_m$, $w_{\{\alpha_j \}}$ are constants, $B^{0}_j = \mathds{1}$ on qubit $j$, and $B^{\alpha_j}_j$ for $\alpha_j \neq 0$ are non-trivial basis generators $B^{\alpha_j}$ acting on qubit $j$, satisfying the relations $\mathrm{Tr} [B^{\alpha}] = 0$ and $\mathrm{Tr}(B^\alpha B^\beta) = \delta_{\alpha\beta}$.
\end{theorem}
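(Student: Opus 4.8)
The plan is to work entirely in the product operator basis for superoperators and to recognize the recursion in Eq.~\eqref{eq:Lsmdef} as a M\"obius inversion over the Boolean lattice of qubit subsets. First I would fix a single-qubit Hermitian basis $\{B^\alpha\}$ with $B^0\propto\mathds{1}$, $\mathrm{Tr}(B^\alpha)=0$ for $\alpha\neq0$, and $\mathrm{Tr}(B^\alpha B^\beta)=\delta_{\alpha\beta}$, so that the operators $\bigotimes_j(B^{\alpha_j}_j\otimes[B^{\alpha'_j}_j]^*)$ form a complete orthogonal basis for $n$-qubit superoperators. Then $\Lcalonent$ has a unique expansion $\Lcalonent=\sum_{\{\alpha_j\},\{\alpha'_j\}}X_{\{\alpha_j\},\{\alpha'_j\}}\bigotimes_j(B^{\alpha_j}_j\otimes[B^{\alpha'_j}_j]^*)$; for a dissipative contribution $A\rho A^\dagger\mapsto A\otimes A^*$ with $A=\sum_{\{\alpha_j\}}w_{\{\alpha_j\}}\bigotimes_j B^{\alpha_j}_j$ the coefficients take the factorized form $X=w_{\{\alpha_j\}}w^*_{\{\alpha'_j\}}$ quoted in the statement. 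For each qubit $j$ I would introduce the single-qubit projector $\mathcal{I}_j$ onto the identity--identity component $B^0_j\otimes[B^0_j]^*$ and its complement $\mathcal{Q}_j=\mathrm{id}-\mathcal{I}_j$; these commute across qubits, and the family $\Pi_T:=\prod_{j\in T}\mathcal{Q}_j\prod_{j\notin T}\mathcal{I}_j$ (over $T\subseteq S$) consists of orthogonal projectors with $\sum_{T\subseteq S}\Pi_T=\mathrm{id}$, where $\Pi_T[\Lcalonent]$ collects exactly the basis terms that are non-trivial on every qubit of $T$ and identity on every qubit of $\bar T$.

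The key computational step is to show that the reduction map $\mathcal{M}_{S_m}[\cdot]:=\frac{1}{d^{|\bar S_m|}}\mathrm{Tr}_{\bar S_m}[\cdot]\otimes\mathds{1}_{\bar S_m}$ coincides, for the chosen normalization, with the conditional-expectation projector $\prod_{j\in\bar S_m}\mathcal{I}_j$. Acting on a basis element, the superoperator trace over qubit $j$ produces $\mathrm{Tr}(B^{\alpha_j})\,\mathrm{Tr}(B^{\alpha'_j})$, which vanishes unless $\alpha_j=\alpha'_j=0$ since $B^\alpha$ is traceless for $\alpha\neq0$; the accompanying $\otimes\mathds{1}_{\bar S_m}$ and normalization restore the identity factor, so $\mathcal{M}_{S_m}$ retains precisely the terms that are identity--identity on all of $\bar S_m$ while leaving the $S_m$ factors untouched. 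Decomposing the $S_m$ part with $\mathrm{id}=\mathcal{I}_j+\mathcal{Q}_j$ then yields $\mathcal{M}_{S_m}[\Lcalonent]=\sum_{T\subseteq S_m}\Pi_T[\Lcalonent]$.

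I would close the argument by induction on $m=|S_m|$. Substituting this identity into Eq.~\eqref{eq:Lsmdef} and using the inductive form $\Lcalonent^{R}=\Pi_R[\Lcalonent]+c_{|R|}\Pi_\emptyset[\Lcalonent]$ for the proper subsets $R\subsetneq S_m$, all sector projectors $\Pi_T$ with $\emptyset\neq T\subsetneq S_m$ cancel between the two sums, leaving $\Lcalonent^{S_m}=\Pi_{S_m}[\Lcalonent]+c_m\,\Pi_\emptyset[\Lcalonent]$ with $c_m=1-\sum_{k=1}^{m-1}\binom{m}{k}c_k=(-1)^{m-1}$. Reading off $\Pi_{S_m}[\Lcalonent]$ in the basis reproduces the stated sum over $\{\alpha_j\},\{\alpha'_j\}$ with $\alpha_j+\alpha'_j\neq0$ on $S_m$ and $\alpha_j=\alpha'_j=0$ on $\bar S_m$, while $\Pi_\emptyset[\Lcalonent]=X_{\vec 0,\vec 0}\,(\mathds{1}\otimes\mathds{1})=|w_{\{0,\dots,0\}}|^2\,(\mathds{1}\otimes\mathds{1})$ supplies the residual identity term, identifying the prefactor $c_m$ with the constant in the statement.

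The main obstacle I anticipate is not the support structure---which follows cleanly from tracelessness together with the completeness of the $\Pi_T$---but the careful bookkeeping of the residual identity term. One must settle the convention for whether $R=\emptyset$ is included in the proper-subset sum: it must be \emph{excluded} for the decomposition of Theorem~\ref{theorem_one} to reproduce $\Lcalonent$, and this is exactly what forces the nonzero coefficient $c_m=(-1)^{m-1}$ rather than $c_m=0$. One must also fix the normalization of $\mathcal{M}_{S_m}$ consistently so that it is a genuine idempotent conditional expectation. A secondary point is that the fully factorized coefficient $w_{\{\alpha_j\}}w^*_{\{\alpha'_j\}}$ is intrinsic only to the Kraus-type dissipator contributions $A\otimes A^*$; the Hamiltonian commutator contributes terms of the same support class but with non-factorized coefficients, so one should either argue term-by-term over a Kraus decomposition or state the result at the level of the support structure and note that the linearity of the recursion preserves the coefficient pattern within each sector.
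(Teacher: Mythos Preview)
Your proposal is correct and rests on the same two ingredients the paper uses: the expansion of the Lindbladian in the tensor-product basis $\bigotimes_j B^{\alpha_j}_j\otimes[B^{\alpha'_j}_j]^*$, and the observation that tracelessness of $B^{\alpha\neq0}$ makes the normalized partial trace $\tfrac{1}{d^{|\bar S_m|}}\mathrm{Tr}_{\bar S_m}[\cdot]\otimes\mathds{1}_{\bar S_m}$ act as a projector onto terms that are identity--identity on $\bar S_m$. Where you differ is in packaging: the paper argues narratively---``do the partial trace, then subtract the lower-order terms order by order''---and simply asserts the final form with prefactor $(-1)^{m+1}$; you instead introduce the commuting single-qubit projectors $\mathcal{I}_j,\mathcal{Q}_j$ and the sector projectors $\Pi_T=\prod_{j\in T}\mathcal{Q}_j\prod_{j\notin T}\mathcal{I}_j$, recognize the recursion as M\"obius inversion over the Boolean lattice, and derive $c_m=1-\sum_{k=1}^{m-1}\binom{m}{k}c_k=(-1)^{m-1}$ explicitly. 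This buys you a cleaner inductive proof and forces you to confront the $R=\emptyset$ convention, which the paper leaves implicit; your remark that $\emptyset$ must be excluded from the proper-subset sum (else Theorem~\ref{theorem_one} would overcount the identity sector) is exactly right and is what produces the nonzero $c_m$. Your secondary caveat about the factorized coefficients $w_{\{\alpha_j\}}w^*_{\{\alpha'_j\}}$ being intrinsic only to the $A\rho A^\dagger$ Kraus contribution is also well taken; the paper adopts that form without comment, and your suggestion to state the result at the level of the support structure (the $\Pi_{S_m}$ sector) with coefficients inherited linearly is the cleaner way to cover the full Lindbladian.
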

\begin{proof}
We begin by noting that any complex superoperator $\hat{O}$ in the double Hilbert space $SU(d) \otimes SU(d)$ of a qubit with $d$ number of energy levels can be expressed as
\begin{equation}\label{eq:matrix_decomposition}
\hat{O} = \sum_{\alpha,\alpha' = 0,1,\cdots,d^2-1}w_\alpha w_{\alpha'}^* B^\alpha \otimes B^{\alpha'},  
\end{equation}
where $B^0 = \mathds{1}$ and $B^\alpha$ for $\alpha \neq 0$ are the generators of the Lie algebra satisfying the commutation relation 
\begin{equation}
[B^{\alpha}, B^{\beta}] = 2i \sum_{\gamma}f_{\alpha\beta\gamma} B^\gamma,
\end{equation}
and $w_{\alpha,\alpha'}$ are the coefficients of the different basis generators.
Eq.~\eqref{eq:matrix_decomposition} implies the completeness of the basis generators.
Here $f_{\alpha\beta\gamma}$ are the structure constants which are completely antisymmetric with respect to the interchange of pair of indices, generalizing the Levi-Civita symbol $\epsilon_{\alpha\beta\gamma}$ of $\sufrak(2)$. Specifically, the generators $B^\alpha$ and $B^\beta$ for $\alpha, \beta \neq 0$  have the property
\begin{equation}\label{eq:traceless}
\mathrm{Tr} B^\alpha = 0,
\end{equation}
and satisfy the normalization condition:
\begin{equation}
\mathrm{Tr}(B^\alpha B^\beta) = \delta_{\alpha\beta},
\end{equation}
where $\delta_{\alpha\beta}$ is the Kronecker delta function.
As an example, for $\sufrak(2)$ Lie algebra, $B^\alpha = \sigma^\alpha$ which are the Pauli matrices and for $\sufrak(3)$, $B^\alpha = \lambda^\alpha$, the Gell-Mann matrices. 

A Lindbladian superoperator in the double Hilbert space $[SU(d)]^{\otimes n} \otimes [SU(d)]^{\otimes n}$ of $n$ qubits can therefore be written as
\begin{align}\label{eq:lindbladdecompose}
\Lcalthat &= \sum_{\{\alpha_j\},\{\alpha'_j\}} w_{\{\alpha_j\}} w^*_{\{\alpha'_j\}} \left( \bigotimes_j B_j^{\alpha_j}  \right) \otimes \left(\bigotimes_j \left[B_j^{\alpha'_j}\right]^* \right),
\end{align}
where $B^{\alpha_j}_j$ is the basis generator $B^{\alpha_j}$ acting on qubit $j$. To obtain the Lindbladian component $\Lcalthat^{S_m}$ that describes a pure $m$-th order noise correlation, we apply the recurrence relation in Eq.~\eqref{eq:Lsmdef}. The first term in Eq.~\eqref{eq:Lsmdef} involves taking a partial trace of the full Lindbladian $\Lcalthat$ [Eq.~\eqref{eq:lindbladdecompose}] over qubits outside the subset $S_m$. This gives a term
\begin{align}\label{eq:firstterm_of_recur}
&\frac{1}{d^{|\bar{S}_m|}}\mathrm{Tr}_{\bar{S}_m} \left[\Lcalonent \right]\otimes \mathds{1}_{\bar{S}_m}  \nonumber\\
&= \sum_{\substack{\{\alpha_j\},\{\alpha'_j\}\\ \alpha_j = \alpha'_j = 0 \\ \mathrm{for\,} j \notin S_m}} w_{\{\alpha_j\}} w^*_{\{\alpha'_j\}} \left( \bigotimes_j B_j^{\alpha_j}  \right) \otimes \left(\bigotimes_j \left[B_j^{\alpha'_j}\right]^* \right),
\end{align} 
which can have non-trivial basis supports only on qubit $j \in S_m$. This means that after the partial trace, we are left only with the terms $\Lcalthat^{R}$ for all $R \subseteq S_m$ as all the terms $\Lcalthat^{\bar{R}}$, where $\bar{R}$ is the complement of $R$, drop out as these terms have at least one non-trivial basis on a qubit $j \in \bar{S}_m$ and the trace of this non-trivial basis is zero [Eq.~\eqref{eq:traceless}]. In Eq.~\eqref{eq:firstterm_of_recur}, we divide the result after the partial trace by $d^{|\bar{S}_m|}$ to cancel the same constant obtained from tracing over the identity matrices on qubits in $\bar{S}_m$. 

The reduced Lindbladian after the partial trace can be expressed in terms of the basis generators in the Hilbert space of the qubits in the subset $S_m$ where these basis generators span a complete basis set in this Hilbert subspace. Therefore, the partial trace gives  Lindbladian components that can describe any possible noise correlation (from the first-order to $m$-th order correlation) in the subsystem $S_m$ consisting of $m$ qubits. To obtain the Lindbladian component $\Lcalthat^{S_m}$ that describes a pure $m$-th order noise correlation, we need to subtract all the Lindbladians describing the lower [from the first to the $(m-1)^\mathrm{th}$]-order noise correlations from the partial trace result as shown by the second term in Eq.~\eqref{eq:Lsmdef}. This observation allows us to use a recurrence relation as in Eq.~\eqref{eq:Lsmdef} to systematically extract the Lindbladian components order by order starting from the first order. 

The algorithm to find each of the terms in the cluster expansion is as follows. First, we calculate all the first-order Lindbladian terms $\Lcalthat^{S_1}$  by tracing the full Lindbladian $\Lcalthat$ over all qubits except the qubit of interest. Each of the resulting first-order terms is a component in the full Lindbladian that has non-trivial basis supports only on a single qubit, and therefore describes the first-order noise in each individual qubit. Next, we do the partial trace of the full Lindbladian over all qubits except a pair of qubit. This gives a reduced Lindbladian that span a complete basis in the Hilbert space of a pair of qubit. To get the second-order $\Lcalthat^{S_2}$ Lindbladian term which describes a pure second-order noise correlation between the qubits, we have to subtract the first-order terms (obtained from the first step) from the reduced Lindbladian after the partial trace. Doing this for all pairs of qubits gives all the second-order terms. We can repeat this procedure of doing partial trace and then subtracting the lower-order terms from the partial-trace result to obtain the higher- and higher-order terms. The algorithm for the recurrence relation can be found in Algorithm~\ref{algo_recur}. The end result is that $\Lcalonent^{S_m}$ can be expressed using identities on all qubits and generators with non-trivial basis on all of the $m$ qubits in the subset $S_m$, i.e.,
\begin{align}\label{eq:LSm}
  &\Lcalonent^{S_m} \nonumber\\
	&=\left( \frac{1}{d^{|\bar{S}_m|}}\mathrm{Tr}_{\bar{S}_m} \left[\Lcalonent \right]\otimes \mathds{1}_{\bar{S}_m} \right) - \sum_{R\subsetneq S_m} \Lcalonent^{R} \nonumber\\
		&=(-1)^{m+1}|w _{\{0,0,\cdots,0\}}|^2 (\mathds{1} \otimes \mathds{1}) \nonumber\\\nonumber\\
	&+ \sum_{\substack{\{\alpha_{j}\}, \{\alpha'_j\}\\ \alpha_j+\alpha'_j \neq 0 \\ \mathrm{for}\, j \in S_m,\\
	\alpha_j = \alpha'_j = 0 \\ \mathrm{for}\, j \notin S_m
	} }  w_{\{\alpha_j\}} w_{\{\alpha'_j\}}^*
\left(\bigotimes_{j} B^{\alpha_j}_j\right)  \otimes \left(\bigotimes_{j}\left[B_j^{\alpha_j'}\right]^*\right).
\end{align}
The condition $\alpha_j + \alpha'_j \neq 0$ in the sum of Eq.~\eqref{eq:LSm} is to ensure that the operator has nontrivial basis generators on qubit $j$ in the subset $S_m$ and the condition $\alpha_j = \alpha'_j = 0$ is for the operator to have only the trivial identity basis on  qubit $j \notin S_m$. 
\end{proof}

As a simple example, let us consider the case of a quantum processor consisting of two 2-level qubits. We can use Theorem~\ref{theorem_one} to express the total Lindbladian as
\begin{equation}
\Lcalthat = \Lcalonent^{\{1\}} + \Lcalonent^{\{2\}}+ \Lcalonent^{\{1,2\}}.
\end{equation}
Using Theorem~\ref{theorem_two} and Eq.~\eqref{eq:Lsmdef}, we can write the above decomposed Lindbladian components in terms of their basis generators as
\begin{widetext}
\begin{align}
\Lcalonent^{\{1\}} &= |w_{\{0,0\}}|^2 (\mathds{1}_{\{1,2 \}} \otimes \mathds{1}_{\{1,2\}}) + \sum_{\alpha_1=1,2,3} w_{\{\alpha_1,0\}} w_{\{0,0\}}^* \left( \left(\sigma^{\alpha_1}_1 \otimes \mathds{1}_{\{2\}} \right) \otimes \left( \mathds{1}_{\{1\}} \otimes \mathds{1}_{\{2\}}\right)\right) &\nonumber\\
&\quad+  \sum_{\alpha'_1=1,2,3} w_{\{0,0\}} w_{\{\alpha'_1,0\}}^* \left( \left( \mathds{1}_{\{1\}} \otimes \mathds{1}_{\{2\}}\right) \otimes \left((\sigma^{\alpha'_1}_1)^* \otimes  \mathds{1}_{\{2\}} \right)\right) \nonumber\\
&\quad+ \sum_{\alpha_1, \alpha'_1=1,2,3} w_{\{\alpha_1,0\}} w_{\{\alpha'_1,0\}}^* \left( \left( \sigma^{\alpha_1}_1 \otimes \mathds{1}_{\{2\}}\right) \otimes \left((\sigma^{\alpha'_1}_1)^* \otimes  \mathds{1}_{\{2\}} \right)\right), \nonumber\\
\Lcalonent^{\{2\}} &= |w_{\{0,0\}}|^2 (\mathds{1}_{\{1,2 \}} \otimes \mathds{1}_{\{1,2\}}) + \sum_{\alpha_2=1,2,3} w_{\{0,\alpha_2\}} w_{\{0,0\}}^* \left( \left( \mathds{1}_{\{1\}} \otimes \sigma^{\alpha_2}_2 \right) \otimes \left( \mathds{1}_{\{1\}} \otimes \mathds{1}_{\{2\}}\right)\right) &\nonumber\\
&\quad+  \sum_{\alpha'_2=1,2,3} w_{\{0,0\}} w_{\{0,\alpha'_2\}}^* \left( \left( \mathds{1}_{\{1\}} \otimes \mathds{1}_{\{2\}}\right) \otimes \left(  \mathds{1}_{\{2\}} \otimes (\sigma^{\alpha'_2}_2)^* \right)\right)\nonumber\\
&\quad + \sum_{\alpha_2,\alpha'_2=1,2,3}  w_{\{0,\alpha_2\}} w_{\{0,\alpha'_2\}}^* \left( \left(  \mathds{1}_{\{1\}} \otimes \sigma^{\alpha_2}_2  \right) \otimes \left( \mathds{1}_{\{1\}}  \otimes (\sigma^{\alpha'_2}_2)^*\right)\right), \nonumber\\
\Lcalonent^{\{1,2\}} &= -|w_{\{0,0\}}|^2 (\mathds{1}_{\{1,2 \}} \otimes \mathds{1}_{\{1,2\}}) + \sum_{\alpha_1, \alpha_2=1,2,3}  w_{\{\alpha_1,\alpha_2\}} w_{\{0,0\}}^* \left( \left( \sigma^{\alpha_1}_1  \otimes \sigma^{\alpha_2}_2 \right) \otimes \left( \mathds{1}_{\{1\}} \otimes \mathds{1}_{\{2\}}\right)\right) &\nonumber\\
&\quad+  \sum_{\alpha'_1,\alpha'_2=1,2,3} w_{\{0,0\}} w_{\{\alpha'_{1},\alpha'_2\}}^* \left( \left( \mathds{1}_{\{1\}} \otimes \mathds{1}_{\{2\}}\right) \otimes \left(  (\sigma^{\alpha'_1}_1)^* \otimes (\sigma^{\alpha'_2}_2)^* \right)\right) \nonumber\\
&\quad+  \sum_{\alpha_1,\alpha'_2=1,2,3} w_{\{\alpha_1,0\}} w_{\{0,\alpha'_2\}}^* \left( \left(\sigma^{\alpha_1}_1\otimes \mathds{1}_{\{2\}}\right) \otimes \left(  \mathds{1}_{\{1\}} \otimes (\sigma^{\alpha'_2}_2)^* \right)\right) \nonumber\\
&\quad+  \sum_{\alpha_2,\alpha'_1=1,2,3} w_{\{0,\alpha_2\}} w_{\{\alpha'_1,0\}}^* \left( \left( \mathds{1}_{\{1\}} \otimes \sigma^{\alpha_2}_2\right) \otimes \left( (\sigma^{\alpha'_1}_1)^* \otimes \mathds{1}_{\{2\}}  \right)\right) \nonumber\\
&\quad+  \sum_{\alpha_1,\alpha_2,\alpha'_1=1,2,3} w_{\{\alpha_1,\alpha_2\}} w_{\{\alpha'_1,0\}}^* \left( \left( \sigma^{\alpha_1}_1 \otimes \sigma^{\alpha_2}_2\right) \otimes \left( (\sigma^{\alpha'_1}_1)^* \otimes \mathds{1}_{\{2\}}  \right)\right) \nonumber\\
&\quad+  \sum_{\alpha_1,\alpha_2,\alpha'_2=1,2,3} w_{\{\alpha_1,\alpha_2\}} w_{\{0,\alpha'_2\}}^* \left( \left( \sigma^{\alpha_1}_1 \otimes \sigma^{\alpha_2}_2\right) \otimes \left( \mathds{1}_{\{1\}} \otimes (\sigma^{\alpha'_2}_2)^*   \right)\right) \nonumber\\
&\quad+  \sum_{\alpha_1,\alpha'_1,\alpha'_2=1,2,3} w_{\{\alpha_1,0\}} w_{\{\alpha'_1,\alpha'_2\}}^* \left( \left( \sigma^{\alpha_1}_1 \otimes \mathds{1}_{\{2\}}\right) \otimes \left( (\sigma^{\alpha'_1}_1)^*\otimes (\sigma^{\alpha'_2}_2)^*   \right)\right) \nonumber\\
&\quad+  \sum_{\alpha_2,\alpha'_1,\alpha'_2=1,2,3} w_{\{0,\alpha_2\}} w_{\{\alpha'_1,\alpha'_2\}}^* \left( \left( \mathds{1}_{\{1\}}\otimes \sigma^{\alpha_2}_2\right) \otimes \left( (\sigma^{\alpha'_1}_1)^*\otimes (\sigma^{\alpha'_2}_2)^*   \right)\right) \nonumber\\
&\quad+ \sum_{\alpha_1,\alpha_2, \alpha'_1, \alpha'_2=1,2,3} w_{\{\alpha_1,\alpha_2\}} w_{\{\alpha'_1,\alpha'_2\}}^* \left( \left( \sigma^{\alpha_1}_1   \otimes \sigma^{\alpha_2}_2  \right) \otimes \left(  (\sigma^{\alpha'_1}_1)^* \otimes (\sigma^{\alpha'_2}_2)^*\right)\right),
\end{align}
\end{widetext}
with $\sigma^{1,2,3}_j$ being the Pauli matrices acting on qubit $j$.

Now, we are going to derive Eq.~\eqref{eq:recur} from Eq.~\eqref{eq:Lsmdef}. To this end, we will use the following two lemmas (Lemma~\ref{lemma_one} and~\ref{lemma_two}):
\begin{lemma}\label{lemma_one}
Let $A$ be a matrix such that $||A|| \leq \varepsilon < 1$, then $e^{A} = \mathds{1} + A + R$ where $||R|| \leq e^{\varepsilon} - (1 + \varepsilon)$.
\end{lemma}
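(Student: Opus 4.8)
The plan is to read off the remainder directly from the power series of the matrix exponential and then bound it term by term. Writing $e^{A} = \sum_{n=0}^{\infty} A^{n}/n!$, I would peel off the zeroth- and first-order contributions to define
\begin{equation}
R \equiv e^{A} - \mathds{1} - A = \sum_{n=2}^{\infty} \frac{A^{n}}{n!}.
\end{equation}
Since the exponential series converges absolutely in any submultiplicative norm for a bounded operator $A$, this rearrangement is legitimate and $R$ is well defined, so the content of the lemma reduces entirely to estimating $\|R\|$.

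The core estimate is a triangle-inequality bound followed by submultiplicativity of the norm. First I would pass the norm inside the sum, $\|R\| \leq \sum_{n=2}^{\infty} \|A^{n}\|/n!$, and then use $\|A^{n}\| \leq \|A\|^{n}$ to obtain $\|R\| \leq \sum_{n=2}^{\infty} \|A\|^{n}/n!$. Invoking the hypothesis $\|A\| \leq \varepsilon$, together with the fact that $x \mapsto \sum_{n\geq 2} x^{n}/n!$ is monotone increasing on $[0,\infty)$, gives $\|R\| \leq \sum_{n=2}^{\infty} \varepsilon^{n}/n!$. I would close the argument by recognizing the tail as
\begin{equation}
\sum_{n=2}^{\infty} \frac{\varepsilon^{n}}{n!} = e^{\varepsilon} - (1 + \varepsilon),
\end{equation}
which is exactly the claimed bound.

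As for obstacles, there is essentially no hard step: the only point needing a moment's care is justifying the termwise norm bound, i.e.\ the absolute convergence of the exponential series, which in fact holds for \emph{every} bounded operator. Consequently the hypothesis $\varepsilon < 1$ is not required for the inequality itself; it merely pins down the small-noise regime in which $R$ is a genuinely higher-order (quadratic in $\varepsilon$) correction, since $e^{\varepsilon} - (1+\varepsilon) = \varepsilon^{2}/2 + O(\varepsilon^{3})$. This is precisely the control needed to justify the approximations $e^{\Lcalonent} \approx \mathds{1} + \Lcalonent$ used in deriving Eq.~\eqref{eq:recur}.
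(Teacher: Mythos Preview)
Your proof is correct and follows essentially the same route as the paper: expand $e^{A}$ as its power series, identify $R=\sum_{n\geq 2}A^{n}/n!$, and bound term by term via the triangle inequality and submultiplicativity to recover $\sum_{n\geq 2}\varepsilon^{n}/n!=e^{\varepsilon}-(1+\varepsilon)$. Your additional remark that the hypothesis $\varepsilon<1$ is not needed for the inequality itself is also accurate.
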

\begin{proof}
Since $e^A \equiv 1+ A+ A^2/2! + A^3/3! + A^4/4! +\cdots$, then $R = A^2/2! + A^3/3! + A^4/4!+ \cdots$. As a result, we have
\begin{align}
||R|| &\leq \frac{||A||^2}{2!} + \frac{||A||^3}{3!}  + \frac{||A||^4}{4!}  + \cdots, \nonumber\\
&\leq \frac{\varepsilon^2}{2!} + \frac{\varepsilon^3}{3!} + \frac{\varepsilon^4}{4!} + \cdots \nonumber\\
&\leq e^{\varepsilon} - (1 + \varepsilon).
\end{align}
The above lemma is proven as Lemma 12 in Ref.~\cite{krovi2023improved}.
\end{proof}

\begin{lemma}\label{lemma_two}
Let $A$ be a matrix such that $||A|| \leq \varepsilon < 1$, then $\mathrm{ln}(\mathds{1} + A) =  A + R $ where $||R|| \leq  \mathrm{ln} \left[e^{-\varepsilon}/(1-\varepsilon) \right]$.
\end{lemma}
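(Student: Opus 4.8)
The plan is to mirror exactly the strategy used in the proof of Lemma~\ref{lemma_one}, replacing the exponential series with the Mercator (Taylor) series for the matrix logarithm. Since $\|A\| \leq \varepsilon < 1$, the series
\begin{equation}
\mathrm{ln}(\mathds{1} + A) = \sum_{k=1}^{\infty} \frac{(-1)^{k+1}}{k} A^k = A - \frac{A^2}{2} + \frac{A^3}{3} - \cdots
\end{equation}
converges absolutely in operator norm. First I would isolate the remainder by subtracting off the linear term, so that $R \equiv \mathrm{ln}(\mathds{1}+A) - A = \sum_{k=2}^{\infty} \frac{(-1)^{k+1}}{k} A^k$.

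Next I would bound $\|R\|$ using the triangle inequality together with submultiplicativity of the operator norm, $\|A^k\| \leq \|A\|^k \leq \varepsilon^k$, which yields
\begin{equation}
\|R\| \leq \sum_{k=2}^{\infty} \frac{\|A\|^k}{k} \leq \sum_{k=2}^{\infty} \frac{\varepsilon^k}{k}.
\end{equation}
The final step is to recognize this scalar tail in closed form. Since $-\mathrm{ln}(1-\varepsilon) = \sum_{k=1}^{\infty} \varepsilon^k/k$, peeling off the $k=1$ term gives $\sum_{k=2}^{\infty} \varepsilon^k/k = -\mathrm{ln}(1-\varepsilon) - \varepsilon$. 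Rewriting $-\varepsilon = \mathrm{ln}(e^{-\varepsilon})$ and combining the logarithms produces precisely $\mathrm{ln}\left[e^{-\varepsilon}/(1-\varepsilon)\right]$, which is the claimed bound.

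There is no genuine obstacle here: the only subtlety is ensuring the matrix logarithm series is well-defined and absolutely convergent, which is guaranteed by the hypothesis $\varepsilon < 1$. All the remaining work is the elementary manipulation of the scalar series $\sum_{k=2}^\infty \varepsilon^k/k$ and a one-line logarithm identity, exactly parallel to how Lemma~\ref{lemma_one} collapses its tail into $e^{\varepsilon} - (1+\varepsilon)$. I would therefore expect the writeup to be short, with the ``hard part'' being nothing more than correctly identifying the closed form of the tail and verifying it matches the stated expression.
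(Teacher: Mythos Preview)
Your proposal is correct and matches the paper's proof essentially line for line: the paper also expands $\ln(\mathds{1}+A)$ as its Mercator series, defines $R$ as the tail starting at $-A^2/2$, bounds $\|R\|$ termwise by $\sum_{k\ge 2}\varepsilon^k/k$, and then invokes the identity $\sum_{n\ge 2}\varepsilon^n/n = \ln\!\left[e^{-\varepsilon}/(1-\varepsilon)\right]$ to conclude.
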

\begin{proof}
Since $\mathrm{ln}(\mathds{1} + A) \equiv A - A^2/2 + A^3/3 - A^4/4+ \cdots$, then $R = -A^2/2 + A^3/3 - A^4/4+ \cdots$ .
\begin{align}
||R|| &\leq \frac{||A||^2}{2} + \frac{||A||^3}{3} + \frac{||A||^4}{4} + \cdots, \nonumber\\
&\leq \frac{\varepsilon^2}{2} + \frac{\varepsilon^3}{3} + \frac{\varepsilon^4}{4} + \cdots \nonumber\\
& \leq  \mathrm{ln} \left(\frac{e^{-\varepsilon}}{1 - \varepsilon}\right).
\end{align}
where in going to the third line, we have used the series expansion $\mathrm{ln} \left[e^{-\varepsilon}/(1-\varepsilon)\right] = \sum_{n = 2} \varepsilon^{n}/n$.
\end{proof}

Using Lemma~\ref{lemma_one} and~\ref{lemma_two}, we now derive Eq.~\eqref{eq:recur} from Eq.~\eqref{eq:Lsmdef}.
\begin{theorem}\label{theorem_three}
Let $||\Lcalthat|| \leq \varepsilon < \mathrm{ln}(2)$, then 
\begin{align}
&\left|\left|\mathrm{ln}\left[\frac{1}{d^{|\bar{S}_m|}}  \mathrm{Tr}_{\bar{S}_m}\left( e^{\Lcalthat} \right)  \right] - \frac{1}{d^{|\bar{S}_m|}} \mathrm{Tr}_{\bar{S}_m} \left(\Lcalthat \right)  \right|\right|\nonumber\\
&\leq\mathrm{ln}  \left(\frac{\mathrm{exp}(1- e^{\varepsilon}) }{2 - e^{\varepsilon}} \right) + e^{\varepsilon} - 1 - \varepsilon.
\end{align}
\end{theorem}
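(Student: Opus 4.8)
The plan is to reduce the statement to the two perturbative estimates already in hand, \cref{lemma_one} and \cref{lemma_two}, by tracking a single error radius as it propagates through the composition of the exponential, the normalized partial trace, and the logarithm. First I would apply \cref{lemma_one} to $A=\Lcalthat$, which is legitimate since $\|\Lcalthat\|\le\varepsilon<\ln 2<1$; this writes $e^{\Lcalthat}=\mathds{1}+\Lcalthat+R_1$ with $\|R_1\|\le e^{\varepsilon}-1-\varepsilon$. The idea is then to show that, after the normalized partial trace, the resulting operator still has the form $\mathds{1}+(\text{small})$, so that \cref{lemma_two} can be invoked to remove the outer logarithm with a controlled remainder.

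Next I would record the three properties of the map $\Phi(\cdot)=\frac{1}{d^{|\bar{S}_m|}}\mathrm{Tr}_{\bar{S}_m}[\cdot]$ that the argument needs: that $\Phi$ is linear, that it fixes the identity, $\Phi(\mathds{1})=\mathds{1}_{S_m}$, and that it is norm-nonincreasing, $\|\Phi(M)\|\le\|M\|$. The contraction bound I would obtain by writing the partial trace as an average $\Phi(M)=\frac{1}{d^{|\bar{S}_m|}}\sum_i V_i^\dagger M V_i$ over the isometries $V_i=\mathds{1}_{S_m}\otimes|i\rangle_{\bar{S}_m}$, each summand having norm at most $\|M\|$. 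Applying $\Phi$ to the expansion from the previous step and setting $Y=\Phi(\Lcalthat)$, $R_1'=\Phi(R_1)$, I get $\Phi(e^{\Lcalthat})=\mathds{1}_{S_m}+Y+R_1'$, with $\|Y\|\le\varepsilon$ and $\|R_1'\|\le e^{\varepsilon}-1-\varepsilon$ by contractivity.

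Now write $\Phi(e^{\Lcalthat})=\mathds{1}+A$ with $A=Y+R_1'$. The triangle inequality gives $\|A\|\le\varepsilon+(e^{\varepsilon}-1-\varepsilon)=e^{\varepsilon}-1$, and this is exactly where the hypothesis $\varepsilon<\ln 2$ is used: it forces $\|A\|\le e^{\varepsilon}-1<1$, so \cref{lemma_two} applies with effective radius $\varepsilon'=e^{\varepsilon}-1$, yielding $\mathrm{ln}(\mathds{1}+A)=A+R_2$ with $\|R_2\|\le\mathrm{ln}[e^{-\varepsilon'}/(1-\varepsilon')]=\mathrm{ln}[\,e^{1-e^{\varepsilon}}/(2-e^{\varepsilon})\,]$. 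Finally, subtracting $Y$ and using $\mathrm{ln}\,\Phi(e^{\Lcalthat})-Y=(A+R_2)-Y=R_1'+R_2$, a last triangle inequality delivers $\|R_1'\|+\|R_2\|\le (e^{\varepsilon}-1-\varepsilon)+\mathrm{ln}[\,e^{1-e^{\varepsilon}}/(2-e^{\varepsilon})\,]$, which is precisely the claimed bound.

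I expect the one genuinely delicate point to be justifying the three properties of $\Phi$ in the superoperator (doubled Hilbert space) picture with the stated $1/d^{|\bar{S}_m|}$ normalization, since here $\Phi$ acts on a superoperator rather than on a density operator; in particular, unitality and the contraction estimate must be checked against the convention by which the partial trace and the maximally mixed input are combined in \cref{theorem_two}. Once those structural facts are pinned down, the rest is purely bookkeeping of the two remainder radii $e^{\varepsilon}-1-\varepsilon$ and $\mathrm{ln}[\,e^{1-e^{\varepsilon}}/(2-e^{\varepsilon})\,]$.
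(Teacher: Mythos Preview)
Your proposal is correct and follows essentially the same route as the paper: expand $e^{\Lcalthat}=\mathds{1}+\Lcalthat+\mathcal{R}$ via \cref{lemma_one}, push this through the normalized partial trace, apply \cref{lemma_two} with effective radius $e^{\varepsilon}-1$, and then recombine the two remainders by the triangle inequality. The only cosmetic difference is that you isolate the map $\Phi$ and explicitly record its unitality and contractivity (and pinpoint where $\varepsilon<\ln 2$ enters), whereas the paper leaves these structural facts implicit in the chain of inequalities; the logical skeleton and the final bound are identical.
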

\begin{proof}
We first prove the following inequality 
\begin{align}
&\left|\left|\mathrm{ln}\left[\frac{1}{d^{|\bar{S}_m|}}\mathrm{Tr}_{\bar{S}_m}\left(\mathds{1} + \Lcalthat + \mathcal{R} \right)\right] - \frac{1}{d^{|\bar{S}_m|}} \mathrm{Tr}_{\bar{S}_m} \left(\Lcalthat + \mathcal{R}  \right)  \right|\right| \nonumber\\
&\leq \mathrm{ln} \left[\frac{ \mathrm{exp}\left(-\left|\left| \frac{1}{d^{|\bar{S}_m|}} \mathrm{Tr}_{\bar{S}_m} \left[ \Lcalthat + \mathcal{R}  \right]  \right|\right|\right)}{\left(1 -\left|\left| \frac{1}{d^{|\bar{S}_m|}} \mathrm{Tr}_{\bar{S}_m} \left[ \Lcalthat + \mathcal{R}  \right]  \right|\right| \right)}\right] \nonumber\\ 
&\leq \mathrm{ln}\left[ \frac{( \mathrm{exp} \left(-|| \Lcalthat || - || \mathcal{R}  ||\right)}{\left(1 -|| \Lcalthat || - || \mathcal{R}  || \right)}\right]\nonumber\\
&\leq\mathrm{ln}  \left(\frac{\mathrm{exp}(1- e^{\varepsilon}) }{2 - e^{\varepsilon}} \right),
\end{align}
where in going to the second line we have used Lemma~\ref{lemma_two} and in going to the last line, we have used Lemma~\ref{lemma_one}.
Using the triangle inequality, we then have
\begin{align}\label{eq:errorapprox}
&\left|\left|\mathrm{ln}\left[\frac{1}{d^{|\bar{S}_m|}}  \mathrm{Tr}_{\bar{S}_m}\left( e^{\Lcalthat} \right)  \right] - \frac{1}{d^{|\bar{S}_m|}} \mathrm{Tr}_{\bar{S}_m} \left(\Lcalthat \right)  \right|\right| \nonumber\\
&\leq\left|\left|\mathrm{ln}\left[\frac{1}{d^{|\bar{S}_m|}}\mathrm{Tr}_{\bar{S}_m}\left(\mathds{1} + \Lcalthat + \mathcal{R} \right)\right] - \frac{1}{d^{|\bar{S}_m|}} \mathrm{Tr}_{\bar{S}_m} \left(\Lcalthat + \mathcal{R}\right)   \right|\right| \nonumber\\
&\quad + \left| \left|\frac{1}{d^{|\bar{S}_m|}}\left(\mathrm{Tr}_{\bar{S}_m} \mathcal{R}\right) \right|\right| \nonumber\\
&\leq\mathrm{ln}  \left(\frac{\mathrm{exp}(1- e^{\varepsilon}) }{2 - e^{\varepsilon}} \right) +e^{\varepsilon} - (1 + \varepsilon).
\end{align}
\end{proof}
Furthermore, by using Theorem~\ref{theorem_three} we have for $\Lcalonent^{S_m} \equiv\mathrm{Tr}_{\bar{S}_m} \left(\Lcalonent \right)\otimes \frac{\mathds{1}_{\bar{S}_m}}{d^{|\bar{S}_m|}}  - \sum_{R\subsetneq S_m} \Lcalonent^{R}$ in the small noise regime $||\Lcalthat|| \leq \varepsilon < 1/2$,, the following bound for the error in the approximation of Eq.~\eqref{eq:recur} in the main text:
\begin{align}\label{eq:Lsm_ineq}
&\left|\left|\Lcalonent^{S_m} - \left( \mathrm{\ln} \left[\mathrm{Tr}_{\bar{S}_m} \left(e^{\Lcalonent} \right)\otimes \frac{\mathds{1}_{\bar{S}_m}}{d^{|\bar{S}_m|}}\right]  - \sum_{R\subsetneq S_m} \Lcalonent^{R} \right)\right|\right|\nonumber\\
&\qquad\qquad\leq\mathrm{ln} \left(\frac{\mathrm{exp}(1- e^{\varepsilon}) }{2 - e^{\varepsilon}} \right) +e^{\varepsilon} - (1 + \varepsilon)\nonumber\\
&\qquad\qquad<\frac{ (\varepsilon + e\varepsilon^2/2)^2 }{2 \left(1 - \varepsilon -e\varepsilon^2/2\right)} + e\varepsilon^2/2.
\end{align}
In the last line of Eq.~\eqref{eq:Lsm_ineq}, we have used the following relations:
\begin{align}\label{eq:expvarepsilon}
e^{\varepsilon} - (1 + \varepsilon)&= \frac{\varepsilon^2}{2!} + \frac{\varepsilon^3}{3!} + \frac{\varepsilon^4}{4!} + \cdots \nonumber\\
& = \frac{\varepsilon^2}{2} \left(1+\frac{2\varepsilon}{3!} + \frac{2\varepsilon^2}{4!} + \cdots  \right)\nonumber\\
 &< \frac{\varepsilon^2}{2} \left(1+\frac{1}{2!} + \frac{1}{3!} + \cdots  \right)\nonumber\\
&< e\frac{\varepsilon^2}{2},
\end{align}
and
\begin{align}
\mathrm{ln} \left(\frac{e^{-\delta} }{1 -\delta} \right)  & = \frac{\delta^2}{2} \left(1+\frac{2\delta}{3} + \frac{2\delta^2}{4} + \cdots  \right)\nonumber\\
 &< \frac{\delta^2}{2} \left(1+\delta+ \delta^2 + \cdots  \right)\nonumber\\
 &< \frac{\delta^2}{2(1-\delta)},
\end{align}
where
\begin{align}
\delta &= e^\varepsilon -1 \nonumber\\
&< \varepsilon + e\varepsilon^2/2,
\end{align}
by Eq.~\eqref{eq:expvarepsilon}.

\section{Example applications of the cluster expansion approach}\label{app:examples}
In this Appendix, we give two examples of the application of our cluster expansion method. As the first example, we consider the simplest case, i.e., a Lindbladian of a three-qubit quantum processor which consists of only a jump operator acting on qubit 1:
\begin{equation}
\Lcalonethreet = \Lambdaonehat,
\end{equation}
where $\Lambdaone (\rho) =\gamma_1\left(\Aonehat \rho \Aonehatd -\frac{1}{2}\Aonehatd\Aonehat\rho - \frac{1}{2}\rho\Aonehatd\Aonehat \right) $ contains only the jump operator $\Aonehat$ acting on qubit 1. Using the cluster expansion in Eq.~\eqref{eq:recur}, we have the decomposed Lindbladian as\begin{subequations}\label{eq:exampleone}
\begin{align}
\Lcalonethreet^{\{1\}} &= \mathrm{ln}\left( \mathrm{Tr}_{23} \left[e^{\Lambdaonehat }\left( \mathds{1}_{\{1\}}\otimes\frac{\mathds{1}_{\{2,3\}}}{4} \right)\right]\otimes \mathds{1}_{\{2,3\}} \right) \nonumber\\
& = \Lambdaonehat, \\
\Lcalonethreet^{\{2\}} &= \mathrm{ln}\left( \mathrm{Tr}_{13} \left[e^{\Lambdaonehat }\left( \mathds{1}_{\{2\}}\otimes\frac{\mathds{1}_{\{1,3\}}}{4} \right)\right]\otimes \mathds{1}_{\{1,3\}} \right) \nonumber\\
& = 0,\\
\Lcalonethreet^{\{3\}} &= \mathrm{ln}\left( \mathrm{Tr}_{12} \left[e^{\Lambdaonehat}\left( \mathds{1}_{\{3\}}\otimes\frac{\mathds{1}_{\{1,2\}}}{4} \right)\right]\otimes \mathds{1}_{\{1,2\}} \right) \nonumber\\
& = 0,\\
\Lcalonethreet^{\{1,2\}} &= \mathrm{ln}\left( \mathrm{Tr}_{3} \left[e^{\Lambdaonehat }\left( \mathds{1}_{\{1,2\}}\otimes\frac{\mathds{1}_{\{3\}}}{2} \right)\right]\otimes \mathds{1}_{\{3\}} \right) \nonumber\\
&\hspace{0.5cm}- \Lcalonethreet^{\{1\}} - \Lcalonethreet^{\{2\}}   \nonumber\\& = 0,\\
\Lcalonethreet^{\{1,3\}} &= \mathrm{ln}\left( \mathrm{Tr}_{2} \left[e^{\Lambdaonehat }\left( \mathds{1}_{\{1,3\}}\otimes\frac{\mathds{1}_{\{2\}}}{2} \right)\right]\otimes \mathds{1}_{\{2\}} \right) \nonumber\\
&\hspace{0.5cm} - \Lcalonethreet^{\{1\}}   - \Lcalonethreet^{\{3\}}\nonumber\\& = 0,\\
\Lcalonethreet^{\{2,3\}} &= \mathrm{ln}\left( \mathrm{Tr}_{1} \left[e^{\Lambdaonehat }\left( \mathds{1}_{\{2,3\}}\otimes\frac{\mathds{1}_{\{1\}}}{2} \right)\right]\otimes \mathds{1}_{\{1\}} \right) \nonumber\\
&\hspace{0.5cm} - \Lcalonethreet^{\{2\}}   - \Lcalonethreet^{\{3\}}\nonumber\\& = 0,\\
\Lcalonethreet^{\{1,2,3\}} &= \Lcalonethreet - \Lcalonethreet^{\{1\}}  - \Lcalonethreet^{\{2\}} - \Lcalonethreet^{\{3\}} \nonumber\\
&\hspace{0.5cm}-\Lcalonethreet^{\{1,2\}}  - \Lcalonethreet^{\{2,3\}} - \Lcalonethreet^{\{1,3\}}\nonumber\\& = 0.
\end{align}
\end{subequations}
As expected, the cluster expansion gives only a non-trivial Lindbladian acting on qubit 1. 
In evaluating Eq.~\eqref{eq:exampleone}, we have used the fact that $\mathrm{Tr}\left[e^{\Lambdaonehat}\rho\right] = 1$, since $e^{\Lambdaonehat}$ is a CPTP map which preserves the trace of density matrices.

As a second example, let us consider the Lindbladian of a three-qubit quantum processor which comprises a jump operator acting on both qubits 1 and 2, i.e.,
\begin{equation}
\Lcalonethreet = \Lambdaonetwohat,
\end{equation}
where
\begin{equation}
\Lambdaonetwo (\rho) =\gamma_{1,2}\left(\Aonetwohat \rho \Aonetwohatd -\frac{1}{2}\Aonetwohatd\Aonetwohat\rho - \frac{1}{2}\rho\Aonetwohatd\Aonetwohat \right) 
\end{equation}
consists of a correlated jump operator $\Aonetwohat$ acting on both qubits 1 and 2. We decompose the Lindbladian using the cluster expansion [Eq.~\eqref{eq:recur}] as 
\begin{subequations}
\begin{align}
\Lcalonethreet^{\{1\}} &= \mathrm{ln}\left( \mathrm{Tr}_{23} \left[e^{\Lambdaonetwohat }\left( \mathds{1}_{\{1\}}\otimes\frac{\mathds{1}_{\{2,3\}}}{4} \right)\right]\otimes \mathds{1}_{\{2,3\}} \right)\nonumber\\
& = \mathrm{ln}\left(\mathrm{Tr}_{2}\left[  \frac{1}{2} e^{\Lambdaonetwohat} \right]\right)  \equiv \langle \Lambdaonetwohat \rangle_2, \\
\Lcalonethreet^{\{2\}} &= \mathrm{ln}\left( \mathrm{Tr}_{13} \left[e^{\Lambdaonetwohat }\left( \mathds{1}_{\{2\}}\otimes\frac{\mathds{1}_{\{1,3\}}}{4} \right)\right]\otimes \mathds{1}_{\{1,3\}} \right)\nonumber\\
& = \mathrm{ln}\left(\mathrm{Tr}_{1}\left[  \frac{1}{2} e^{\Lambdaonetwohat} \right]\right)  \equiv \langle \Lambdaonetwohat \rangle_1, \\
\Lcalonethreet^{\{1,2\}} &= \Lambdaonetwohat-  \langle\Lambdaonetwohat \rangle_2 -  \langle\Lambdaonetwohat \rangle_1,\\
\Lcalonethreet^{\{3\}} &= \Lcalonethreet^{\{2,3\}} = \Lcalonethreet^{\{1,3\}} = \Lcalonethreet^{\{1,2,3\}} = 0.
\end{align}
\end{subequations}
The total Lindbladian is therefore decomposed into components that act on qubit 1, qubit 2 and both qubits 1 and 2.  The single-qubit components $\Lcalonent^{S_1}$ (e.g., $\Lcalonethreet^{\{1\}}, \Lcalonethreet^{\{2\}}$) of the Lindbladian   can be thought of as parts of the Lindbladian that act on each qubit irrespective of the other qubits states, and the two-qubit components $\Lcalonent^{S_2}$ (e.g., $\Lcalonethreet^{\{1,2\}}$)  of the Lindbladian describe pure two-qubit correlations.  Note that while the single-qubit components $\Lcalonent^{S_{1}}$ of the Lindbladian  obtained from the cluster expansion approach are guaranteed to be a physical Lindbladian, the higher-order correlated components $\Lcalonent^{S_{m>1}}$ are not guaranteed to be physical. However, in this work we find that we can construct a CPTP approximate noise channel by choosing appropriate gain factors for each of the Lindbladian components in Eq.~\eqref{eq:Napproxm}.

\section{Algorithm to simulate $\dsl2,0,2\dsr$ code}\label{sec:algorithmtwo}
In Algorithm~\ref{algo_202} below, we provide the pseudocode for simulating the $\dsl2,0,2\dsr$ code using either the actual or approximate noise channels.
\begin{widetext}
\begin{algorithm}\small
\caption{$\dsl2,0,2\dsr$ code simulation}\label{algo_202}
\Input{Quantum channel $\mathcal{V}^{ZZ}_{\mathcal{E}}$ and $\mathcal{V}^{XX}_{\mathcal{E}}$ of the $ZZ$ and $XX$ parity checks for $\mathcal{E} = \mathrm{actual/approx}$, and an integer $l$ = number of QEC rounds}
\Output{Set of data-qubit density matrices $\{\rhoe(x|\Psiin)\}$ and syndrome-qubit measurement probability distributions $\{\pe(x|\Psiin)\}$}
\Fn{\FMaintwo{$\mathcal{V}^{ZZ}_{\mathcal{E}}$, $\mathcal{V}^{XX}_{\mathcal{E}}$, $l$}}{
\For{$\Psiin \in \{ \Phi^{+}, \Phi^{-}, \Psi^{-}, \Psi^{+}\}$}{
$x_0 = ``0"$ \\
$x_{\mathrm{set}} = \{x_0\}$\\
\tcp{Initialize data-qubits to one of the the Bell states}
Set $\rhoe (x_0|\Psiin) \gets |\Psiin\rangle\langle \Psiin|$\\
\For{$i \gets 1$ \KwTo  $l$}{
Reset syndrome qubit to $|0\rangle$\\
$x_{\mathrm{tempset}} = \emptyset $\\
\tcp{$ZZ$ check}
\For{$x \in x_{\mathrm{set}}$}{
\tcp{Subscript $a$ refers to the ancilla qubit}
$\rho_{\mathrm{sys}} \gets \mathcal{V}^{ZZ}_\mathcal{E} \big(\rhoe (x|\Psiin) \otimes |0\rangle\langle 0|_a\big)$\\
\For{$ m \in \{``0",``1"\}$}{
\tcp{Keep track of measurement results}
\If{$i >1$}{
$x \gets x + m$}
\Else{
$x \gets m$}
$x_{\mathrm{tempset}} = x_{\mathrm{tempset}} \cup \{x\} $\\
\tcp{Measure ancilla qubit with operator $M_a = |m\rangle\langle m|_a$ for $m = 0,1$}
$\rhoe(x|\Psiin) \gets \mathrm{Tr}_a\left[(\mathds{1}\otimes |m\rangle \langle m|_a) \rho_{\mathrm{sys}}\right]$\\
\tcp{Syndrome measurement probability}
$\pe(x|\Psiin) \gets \mathrm{Tr} \left[ \rhoe(x|\Psiin) \right]$ \\
Store $\pe(x|\Psiin)$\\
\tcp{Normalization of data-qubit density matrix}
$\rhoe(x|\Psiin) \gets  \rhoe(x|\Psiin)/\pe(x|\Psiin)$\\
Store $\rhoe(x|\Psiin)$\\
}}
$x_{\mathrm{set}} \gets x_{\mathrm{tempset}} $\\
$x_{\mathrm{tempset}} \gets \emptyset $\\
\tcp{$XX$ check}
\For{$x \in x_{\mathrm{set}}$}{
$x_{\mathrm{last}} \gets \Fgetlastchar(x)$\\
$\rho_{\mathrm{sys}} \gets \mathcal{V}^{XX}_\mathcal{E} \big(\rhoe (x|\Psiin) \otimes |x_{\mathrm{last}}\rangle\langle x_{\mathrm{last}}|_a\big)$\\
\For{$ m \in \{``0",``1"\}$}{
\tcp{Keep track of measurement results}
$x \gets x + m$ \\
$x_{\mathrm{tempset}} = x_{\mathrm{tempset}} \cup \{x\} $\\
\tcp{Measure ancilla qubit with operator $M_a = |m\rangle\langle m|_a$ for $m = 0,1$}
$\rhoe(x|\Psiin) \gets \mathrm{Tr}_{a}\left[(\mathds{1}\otimes |m\rangle \langle m|_a) \rho_{\mathrm{sys}}\right]$\\
\tcp{Syndrome measurement probability}
$\pe(x|\Psiin) \gets \mathrm{Tr} \left[ \rhoe(x|\Psiin) \right]$ \\
Store $\pe(x|\Psiin)$\\
\tcp{Normalization of data-qubit density matrix}
$\rhoe(x|\Psiin) \gets  \rhoe(x|\Psiin)/\pe(x|\Psiin) $\\
Store $\rhoe(x|\Psiin)$
}}
$x_{\mathrm{set}} \gets x_{\mathrm{tempset}} $\\
}
}
\Return $\rhoe (x|\Psiin)$ and $\pe(x|\Psiin)$ for all $x$ and $\Psiin$}
\end{algorithm}
\end{widetext}

\section{Pulse sequences}\label{sec:pulse_sequence}
In this Appendix, we elaborate the details of the pulse sequences for the $ZZ$ and $XX$ parity checks as depicted in Fig.~\ref{fig:pulse}.  The pulse sequences consist of single-qubit and two-qubit (CNOT) gate pulses [see the labels on the top of panels (a) and (d) of Fig.~\ref{fig:pulse}]. In particular, our two-qubit CNOT gate is implemented using the cross-resonance (CR)~\cite{Rigetti2010Fully,Chow2011Simple} scheme, which is by driving the control qubit at the resonant frequency of the target qubit. Since the CR gate has a high gate speed and low static $ZZ$ coherent error when the frequency of the control qubit is higher~\cite{malekakhlagh2020first} than that of its corresponding target qubit, we choose the qubit frequency arrangement shown in Table~\ref{tab:params} and implement the equivalent parity check circuits shown in the right-hand sides of Fig.~\ref{fig:ZZ_XX_parity_circuit}.

\begin{figure}[t]
\includegraphics[width=\linewidth]{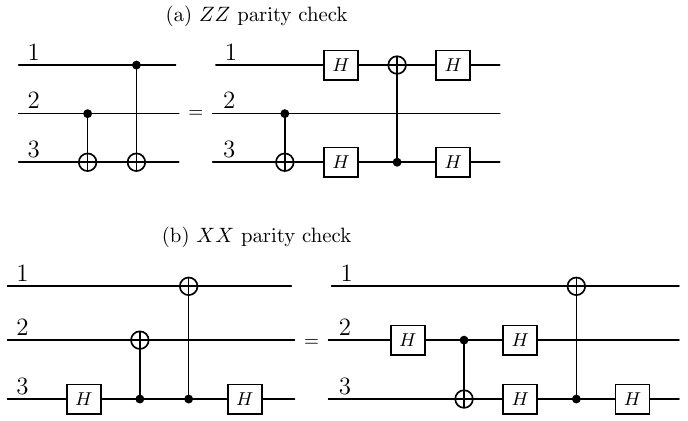}
	\caption{Circuits for implementing (a) $ZZ$ and (b) $XX$ parity checks. To achieve high gate speed and low coherent error of the cross-resonance gates, we reverse the directions of some of the CNOT gates, as shown by the equivalent circuits on the right-hand sides.}\label{fig:ZZ_XX_parity_circuit}
\end{figure} 

An ideal CR gate is equivalent to the $R_{ZX}(\pi/2)$ rotation, which is a conditional rotation of the target qubit around the $X$ axis, where the rotation direction (clockwise/counterclockwise) depends on the state of the control qubit. The action of $R_{ZX}(\pi/2)$ gate can therefore be written as
\begin{equation}
R_{ZX}(\pi/2) \equiv \exp\left(-i\frac{\pi}{4} Z \otimes X\right) = \frac{1}{\sqrt{2}} \left(
\begin{matrix}
1&-i&0&0\\
-i&1&0&0\\
0&0&1&i\\
0&0&i&1
\end{matrix}
\right),
\end{equation}
where the basis ordering of the matrix above is $|00\rangle$, $|01\rangle$, $|10\rangle$, and $|11\rangle$. A CNOT gate is equivalent to the $R_{ZX}(\pi/2)$ gate up to single-qubit rotations and a global phase factor, i.e.,
\begin{equation}\label{eq:CNOT}
\mathrm{CNOT}_{12} = e^{-i\pi/4}  R_{ZX}(\pi/2)\cdot \left[R_Z(-\pi/2)\otimes R_X(-\pi/2)\right],
\end{equation}
where
\begin{equation}
\mathrm{CNOT}_{12} = \left(
\begin{matrix}
1&0&0&0\\
0&1&0&0\\
0&0&0&1\\
0&0&1&0
\end{matrix}
\right).
\end{equation}
In the above, we have used the notation $\mathrm{CNOT}_{ab}$ to indicate a CNOT gate with the control qubit being qubit $a$ and target qubit being qubit $b$. 

\begin{figure}[t]
\includegraphics[width=\linewidth]{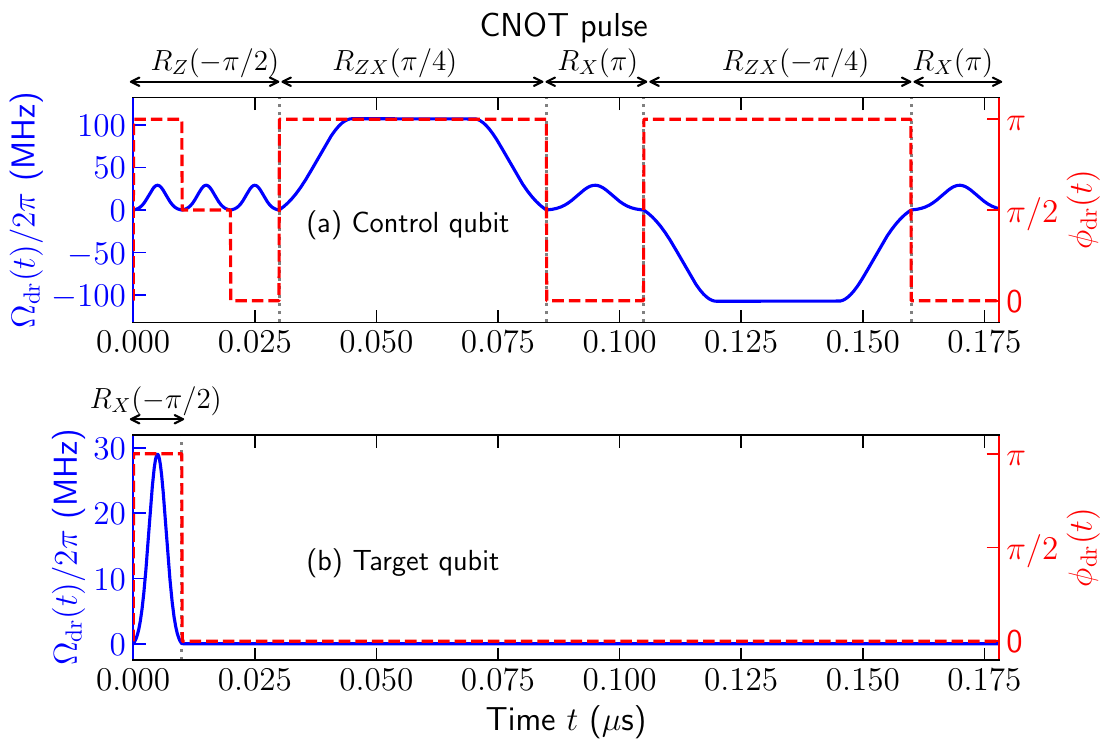}
	\caption{Pulse sequences of the CNOT gate for (a) control and (b) target qubit. The pulse sequence comprises single-qubit rotations and the echoed version of the CR pulse as given in Eqs.~\eqref{eq:CNOT} and~\eqref{eq:CRpulse}.}\label{fig:CNOT_pulse}
\end{figure} 

As in Ref.~\cite{Sheldon2016Procedure}, we implement an echoed version of the CR gate where the $R_{ZX}(\pi/2)$ gate is decomposed into two opposite-sign half-rotations, i.e., $R_{ZX}(\pi/4)$ and $R_{ZX}(-\pi/4)$ rotations, and single-qubit gates $(X\otimes I)$ in the following way:
\begin{equation}\label{eq:CRpulse}
R_{ZX}(\pi/2) = (X\otimes I) \cdot R_{ZX} (-\pi/4) \cdot (X\otimes I) \cdot R_{ZX} (\pi/4). 
\end{equation}
This decomposition makes the echoed version of the CR gate more resilient to coherent errors common to the $R_{ZX} (\pi/4)$ and $ R_{ZX} (-\pi/4)$ rotations where it partially cancels the unwanted $ZZ$ interaction between the qubits. The CNOT gate obtained using this echoed CR pulse is shown in Fig.~\ref{fig:CNOT_pulse}. Here, we implement the single-qubit $R_{Z} (-\pi/2)$ gates  in the CNOT pulse sequence of Fig.~\ref{fig:CNOT_pulse} and the Hadamard gate in the parity-check pulse sequences (Fig.~\ref{fig:pulse}) using sequences of $X$ and $Y$ rotation pulses obtained via the $X$-$Y$ decomposition of the gates~\cite{nielsen2010quantum}. The pulse sequences for the $R_{Z} (-\pi/2)$ and Hadamard gate are depicted in Fig.~\ref{fig:Z_H_pulse}.

In our simulation, we choose the $X$ and $Y$ rotation pulses to have Gaussian envelopes given by
\begin{align}\label{eq:Gaussian}
\Omega_{\mathrm{dr}}(t) &= g(t;A,t_0,\tgate,\sigma) \nonumber\\
&=A \exp\left(- \frac{(t - (t_0 + \tgate/2))^2}{2\sigma^2} \right),
\end{align}
where $A$, $t_0$, and $\tgate$ are amplitude, starting time and duration of the pulse, respectively. Here, we set the width of the Gaussian pulse $\sigma$ to be $\tgate/8$. Additionally, we implement the $R_{ZX}(-\pi/4)$ and $R_{ZX}(\pi/4)$ of the two-qubit gates using the Gaussian-square pulse with a rise and fall time of $3\sigma = 15$ ns, where it can be written using a piecewise function:
\begin{widetext}
\begin{align}
\Omega_{\mathrm{dr}}(t) = \begin{cases}
g(t;A,t_0,6\sigma,\sigma), & \mathrm{for \quad} t_0 \leq t<t_0+3\sigma,\\ 
A, & \mathrm{for \quad} t_0+3\sigma \leq t \leq t_0+\tgate -3\sigma, \\
g(t;A,t_0+\tgate -6\sigma,6\sigma,\sigma), & \mathrm{for \quad}  t_0+\tgate -3\sigma <t \leq t_0+\tgate,
\end{cases}
\end{align}
\end{widetext}
with $g(t;A,t_0,\tgate,\sigma)$ being the Gaussian function as defined in Eq.~\eqref{eq:Gaussian}. 

Each of the pulses is parameterized by a pulse amplitude, duration and frequency. The pulses shown in Fig.~\ref{fig:pulse} are the pulses which are obtained by optimizing the fidelity of each cross resonance and single-qubit gate where the state-averaged fidelity is given by~\cite{Wood2018quantification,Rol2019Fast} 
\begin{align}
F &= \int_{\psi \in \chi} \langle \psi |U^\dagger \Vactual\left( |\psi\rangle \langle\psi |\right) U|\psi\rangle \nonumber\\
&=  \frac{\dchi + \sum_k|\mathrm{Tr}(U^\dagger E_k)|^2}{\dchi(\dchi+1)}.
\end{align}
Here, $\dchi$ is the dimension of Hilbert space $\chi$ of all the qubits involved in the parity checks, $U$ is the target unitary of the ideal gate and $\{E_k\}$ are the Kraus operators corresponding to the superoperator $\Vactual$ of the actual implementation of the gate. This fidelity can be calculated by using the ``average\_gate\_fidelity" function in QuTiP~\cite{Johansson2012Qutip,Johansson2013Qutip}. Each of the pulses in the optimal pulse sequence for both parity checks is obtained by first fixing the amplitude and duration of the pulse and then scanning over the pulse frequency to find the frequency that maximizes the target gate fidelity. Using this optimal frequency, we then scan over the amplitude of the pulse to obtain the amplitude for each of the pulses that maximizes the gate fidelity.

\begin{figure}[t]
\includegraphics[width=\linewidth]{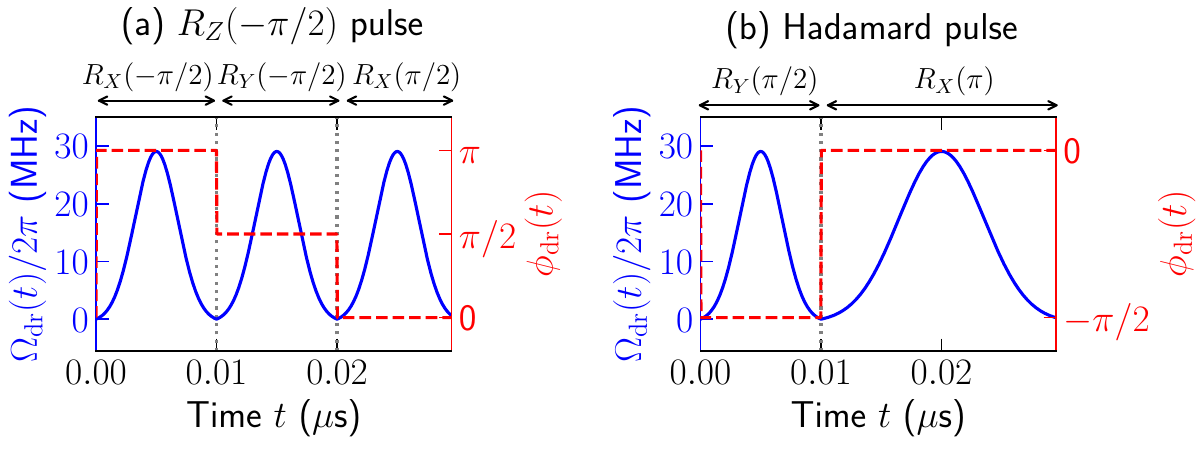}
	\caption{Pulse sequences for the (a) $R_Z(-\pi/2)$ and (b) H gate in terms of the elementary $R_X(\theta)$ and $R_Y(\theta)$ rotation pulses.}\label{fig:Z_H_pulse}
\end{figure}

\section{Simulation results for the triangle geometry}\label{sec:triangle}
In this Appendix, we present results calculated for the 8th round of the $\dsl2,0,2\dsr$ code in the triangle geometry. Figure~\ref{fig:triangle_8round} shows the infidelities between different quantum channels where the approximate channels are simulated using the second- and third-order approximations. The overall qualitative behaviors of the results for the triangle geometry are similar to those of the linear geometry shown in Fig.~\ref{fig:linear_8round} of Sec.~\ref{sec:results}.

\begin{figure*}[t!]
\includegraphics[width=\linewidth]{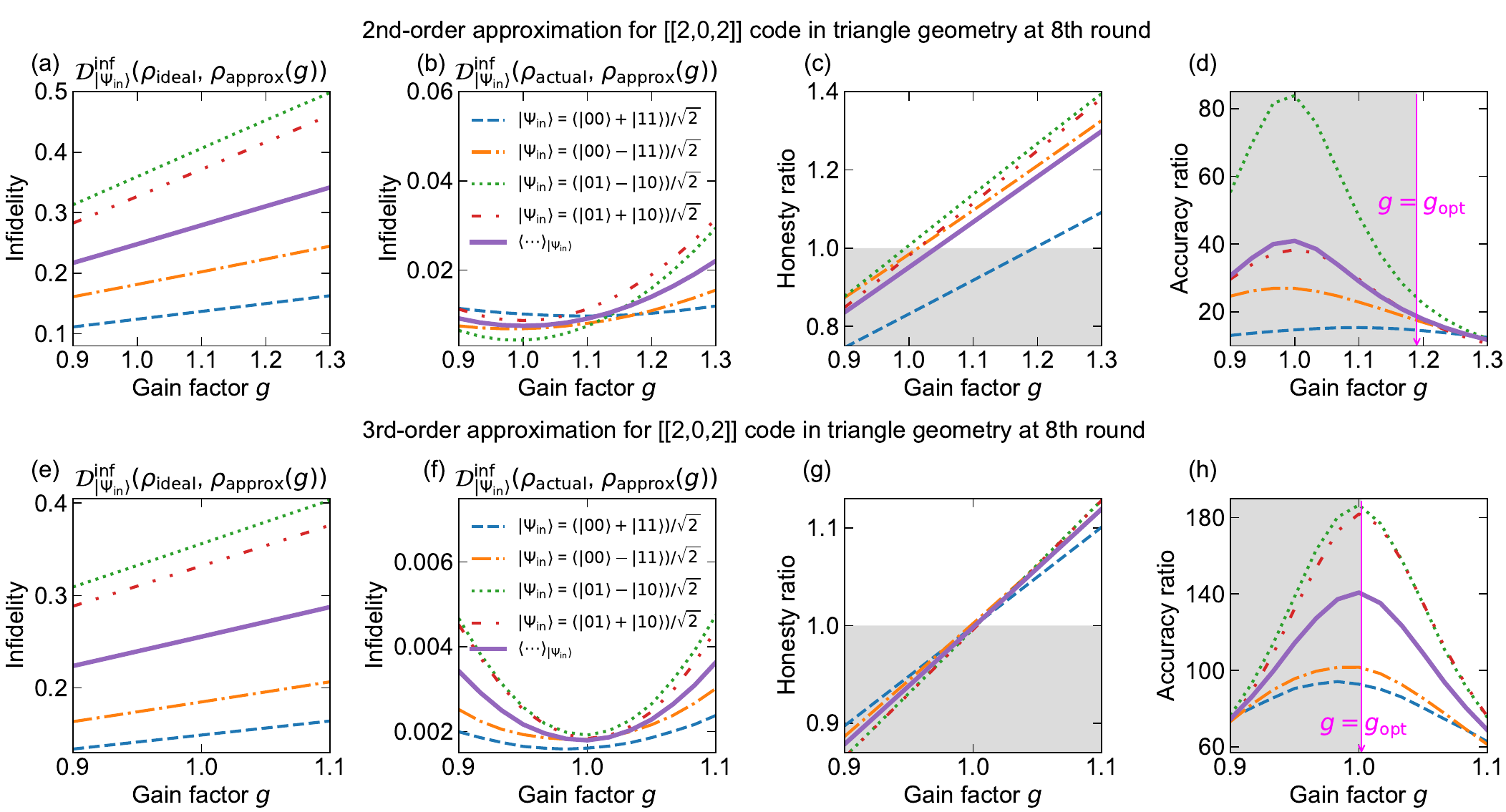}
	\caption{Results simulated for the triangular connectivity using the second-order $\Napproxhattwo$ [upper panels: (a), (b), (c) and (d)] and third-order $\Napproxhatthree$ [lower panels: (e), (f), (g) and (h)] approximate noise channels. They are calculated at the 8th round implementation of the $\dsl2,0,2\dsr$ code for different input data-qubits states $|\Psiin\rangle$; the state-averaged results are denoted by $\langle \cdots\rangle_{|\Psiin\rangle}$ and shown as solid purple lines. (a,b,e,f) Infidelity between the data-qubits density matrices as a function of gain factor $g$: (a,e) Infidelity between ideal and approximate density matrices, (b,f) Infidelity between actual and approximate density matrices. (c,g) Honesty ratio as a function of gain factor $g$. (d,h) Accuracy ratio as a function of gain factor $g$. Magenta arrow denotes $g = \gopt$, i.e., the optimal value of $g$ that maximizes the state-averaged accuracy ratio subject to the honesty criterion [Eq.~\eqref{eq:honesty}], where $\gopt \simeq 1.2$ for the second-order and $\gopt \simeq 1.002$ for the third-order approximate channel. Shaded areas in panels (c), (d), (g), and (h) denote regimes where approximate noise models are not honest (honesty ratio $< 1$).}\label{fig:triangle_8round}
\end{figure*} 

\end{appendix}
\bibliographystyle{quantum}
\bibliography{reference}

\begin{thebibliography}{10}

\bibitem{Shor1995Scheme}
Peter~W. Shor.
\newblock ``Scheme for reducing decoherence in quantum computer memory''.
\newblock \href{https://dx.doi.org/10.1103/PhysRevA.52.R2493}{Phys. Rev. A {\bf
  52}, R2493--R2496}~(1995).

\bibitem{shor1996fault}
Peter~W Shor.
\newblock ``Fault-tolerant quantum computation''.
\newblock In Proceedings of 37th conference on foundations of computer science.
\newblock \href{https://dx.doi.org/10.1109/SFCS.1996.548464}{Pages 56--65}.
\newblock IEEE~(1996).

\bibitem{Calderbank1996Good}
A.~R. Calderbank and Peter~W. Shor.
\newblock ``Good quantum error-correcting codes exist''.
\newblock \href{https://dx.doi.org/10.1103/PhysRevA.54.1098}{Phys. Rev. A {\bf
  54}, 1098}~(1996).

\bibitem{devitt2013quantum}
Simon~J. Devitt, William~J. Munro, and Kae Nemoto.
\newblock ``Quantum error correction for beginners''.
\newblock \href{https://dx.doi.org/10.1088/0034-4885/76/7/076001}{Reports on
  Progress in Physics {\bf 76}, 076001}~(2013).

\bibitem{aharonov1997fault}
Dorit Aharonov and Michael Ben-Or.
\newblock ``Fault-tolerant quantum computation with constant error''.
\newblock In Proceedings of the twenty-ninth annual ACM symposium on Theory of
  computing.
\newblock \href{https://dx.doi.org/10.1137/S0097539799359385}{Pages 176--188}.
\newblock ~(1997).

\bibitem{Gottesman1998Theory}
Daniel Gottesman.
\newblock ``Theory of fault-tolerant quantum computation''.
\newblock \href{https://dx.doi.org/10.1103/PhysRevA.57.127}{Phys. Rev. A {\bf
  57}, 127--137}~(1998).

\bibitem{preskill1998reliable}
John Preskill.
\newblock ``Reliable quantum computers''.
\newblock \href{https://dx.doi.org/10.1098/rspa.1998.0167}{Proceedings of the
  Royal Society of London. Series A: Mathematical, Physical and Engineering
  Sciences {\bf 454}, 385--410}~(1998).

\bibitem{knill1998resilient}
Emanuel Knill, Raymond Laflamme, and Wojciech~H Zurek.
\newblock ``Resilient quantum computation''.
\newblock \href{https://dx.doi.org/10.1126/science.279.5349.342}{Science {\bf
  279}, 342}~(1998).

\bibitem{Guillard2019Repetition}
J\'er\'emie Guillaud and Mazyar Mirrahimi.
\newblock ``Repetition cat qubits for fault-tolerant quantum computation''.
\newblock \href{https://dx.doi.org/10.1103/PhysRevX.9.041053}{Phys. Rev. X {\bf
  9}, 041053}~(2019).

\bibitem{Darmawan2021Practical}
Andrew~S. Darmawan, Benjamin~J. Brown, Arne~L. Grimsmo, David~K. Tuckett, and
  Shruti Puri.
\newblock ``Practical quantum error correction with the xzzx code and kerr-cat
  qubits''.
\newblock \href{https://dx.doi.org/10.1103/PRXQuantum.2.030345}{PRX Quantum
  {\bf 2}, 030345}~(2021).

\bibitem{Chamberland2022Building}
Christopher Chamberland, Kyungjoo Noh, Patricio Arrangoiz-Arriola, Earl~T.
  Campbell, Connor~T. Hann, Joseph Iverson, Harald Putterman, Thomas~C.
  Bohdanowicz, Steven~T. Flammia, Andrew Keller, Gil Refael, John Preskill,
  Liang Jiang, Amir~H. Safavi-Naeini, Oskar Painter, and Fernando~G.S.L.
  Brand\~ao.
\newblock ``Building a fault-tolerant quantum computer using concatenated cat
  codes''.
\newblock \href{https://dx.doi.org/10.1103/PRXQuantum.3.010329}{PRX Quantum
  {\bf 3}, 010329}~(2022).

\bibitem{bonilla2021xzzx}
J.~Pablo Bonilla~Ataides, David~K. Tuckett, Stephen~D. Bartlett, Steven~T.
  Flammia, and Benjamin~J. Brown.
\newblock ``The xzzx surface code''.
\newblock \href{https://dx.doi.org/10.1038/s41467-021-22274-1}{Nature
  communications {\bf 12}, 2172}~(2021).

\bibitem{blume2017demonstration}
Robin Blume-Kohout, John~King Gamble, Erik Nielsen, Kenneth Rudinger, Jonathan
  Mizrahi, Kevin Fortier, and Peter Maunz.
\newblock ``Demonstration of qubit operations below a rigorous fault tolerance
  threshold with gate set tomography''.
\newblock \href{https://dx.doi.org/10.1038/ncomms14485}{Nature communications
  {\bf 8}, 14485}~(2017).

\bibitem{nielsen2021gate}
Erik Nielsen, John~King Gamble, Kenneth Rudinger, Travis Scholten, Kevin Young,
  and Robin Blume-Kohout.
\newblock ``Gate set tomography''.
\newblock \href{https://dx.doi.org/10.22331/q-2021-10-05-557}{Quantum {\bf 5},
  557}~(2021).

\bibitem{proctor2020detecting}
Timothy Proctor, Melissa Revelle, Erik Nielsen, Kenneth Rudinger, Daniel
  Lobser, Peter Maunz, Robin Blume-Kohout, and Kevin Young.
\newblock ``Detecting and tracking drift in quantum information processors''.
\newblock \href{https://dx.doi.org/10.1038/s41467-020-19074-4}{Nature
  communications {\bf 11}, 5396}~(2020).

\bibitem{paris2004quantum}
Matteo Paris and Jaroslav Rehacek.
\newblock ``Quantum state estimation''.
\newblock \href{https://dx.doi.org/https://doi.org/10.1007/b98673}{Springer
  Science \& Business Media}. Heidelberg, Germany~(2004).

\bibitem{howard2006quantum}
M.~Howard, J.~Twamley, C.~Wittmann, T.~Gaebel, F.~Jelezko, and J.~Wrachtrup.
\newblock ``Quantum process tomography and linblad estimation of a solid-state
  qubit''.
\newblock \href{https://dx.doi.org/10.1088/1367-2630/8/3/033}{New Journal of
  Physics {\bf 8}, 33}~(2006).

\bibitem{Samach2022Lindblad}
Gabriel~O. Samach, Ami Greene, Johannes Borregaard, Matthias Christandl, Joseph
  Barreto, David~K. Kim, Christopher~M. McNally, Alexander Melville, Bethany~M.
  Niedzielski, Youngkyu Sung, Danna Rosenberg, Mollie~E. Schwartz, Jonilyn~L.
  Yoder, Terry~P. Orlando, Joel I-Jan Wang, Simon Gustavsson, Morten
  Kjaergaard, and William~D. Oliver.
\newblock ``Lindblad tomography of a superconducting quantum processor''.
\newblock \href{https://dx.doi.org/10.1103/PhysRevApplied.18.064056}{Phys. Rev.
  Appl. {\bf 18}, 064056}~(2022).

\bibitem{emerson2005scalable}
Joseph Emerson, Robert Alicki, and Karol {\.Z}yczkowski.
\newblock ``Scalable noise estimation with random unitary operators''.
\newblock \href{https://dx.doi.org/10.1088/1464-4266/7/10/021}{Journal of
  Optics B: Quantum and Semiclassical Optics {\bf 7}, S347--S352}~(2005).

\bibitem{Magesan2011Scalable}
Easwar Magesan, J.~M. Gambetta, and Joseph Emerson.
\newblock ``Scalable and robust randomized benchmarking of quantum processes''.
\newblock \href{https://dx.doi.org/10.1103/PhysRevLett.106.180504}{Phys. Rev.
  Lett. {\bf 106}, 180504}~(2011).

\bibitem{harper2020efficient}
Robin Harper, Steven~T. Flammia, and Joel~J. Wallman.
\newblock ``Efficient learning of quantum noise''.
\newblock \href{https://dx.doi.org/10.1038/s41567-020-0992-8}{Nature Physics
  {\bf 16}, 1184}~(2020).

\bibitem{van2023probabilistic}
Ewout Van Den~Berg, Zlatko~K. Minev, Abhinav Kandala, and Kristan Temme.
\newblock ``Probabilistic error cancellation with sparse pauli--lindblad models
  on noisy quantum processors''.
\newblock \href{https://dx.doi.org/10.1038/s41567-023-02042-2}{Nature Physics
  {\bf 19}, 1116}~(2023).

\bibitem{Harper2023Learning}
Robin Harper and Steven~T. Flammia.
\newblock ``Learning correlated noise in a 39-qubit quantum processor''.
\newblock \href{https://dx.doi.org/10.1103/PRXQuantum.4.040311}{PRX Quantum
  {\bf 4}, 040311}~(2023).

\bibitem{emerson2007symmetrized}
Joseph Emerson, Marcus Silva, Osama Moussa, Colm Ryan, Martin Laforest,
  Jonathan Baugh, David~G Cory, and Raymond Laflamme.
\newblock ``Symmetrized characterization of noisy quantum processes''.
\newblock \href{https://dx.doi.org/10.1126/science.1145699}{Science {\bf 317},
  1893}~(2007).

\bibitem{Silva2008Scalable}
M.~Silva, E.~Magesan, D.~W. Kribs, and J.~Emerson.
\newblock ``Scalable protocol for identification of correctable codes''.
\newblock \href{https://dx.doi.org/10.1103/PhysRevA.78.012347}{Phys. Rev. A
  {\bf 78}, 012347}~(2008).

\bibitem{onorati2021fitting}
Emilio Onorati, Tamara Kohler, and Toby Cubitt.
\newblock ``Fitting quantum noise models to tomography data''.
\newblock \href{https://dx.doi.org/10.22331/q-2023-12-05-1197}{Quantum {\bf 7},
  1197}~(2023).

\bibitem{onorati2023fitting}
{Onorati, Emilio and Kohler, Tamara and Cubitt, Toby S}.
\newblock ``{Fitting time-dependent Markovian dynamics to noisy quantum
  channels}''.
\newblock
  \href{https://dx.doi.org/https://doi.org/10.48550/arXiv.2303.08936}{{arXiv:2303.08936}}~(2023).

\bibitem{lindblad1976generators}
Goran Lindblad.
\newblock ``On the generators of quantum dynamical semigroups''.
\newblock \href{https://dx.doi.org/10.1007/BF01608499}{Communications in
  Mathematical Physics {\bf 48}, 119--130}~(1976).

\bibitem{gorini1976completely}
Vittorio Gorini, Andrzej Kossakowski, and Ennackal Chandy~George Sudarshan.
\newblock ``Completely positive dynamical semigroups of n-level systems''.
\newblock \href{https://dx.doi.org/10.1063/1.522979}{Journal of Mathematical
  Physics {\bf 17}, 821--825}~(1976).

\bibitem{groszkowski2023simple}
Peter Groszkowski, Alireza Seif, Jens Koch, and Aashish~A. Clerk.
\newblock ``Simple master equations for describing driven systems subject to
  classical non-markovian noise''.
\newblock \href{https://dx.doi.org/10.22331/q-2023-04-06-972}{Quantum {\bf 7},
  972}~(2023).

\bibitem{fleming2012non}
C.~H. Fleming and B.~L. Hu.
\newblock ``Non-markovian dynamics of open quantum systems: stochastic
  equations and their perturbative solutions''.
\newblock \href{https://dx.doi.org/10.1016/j.aop.2011.12.006}{Annals of Physics
  {\bf 327}, 1238--1276}~(2012).

\bibitem{Gulacsi2023signatures}
Bal\'azs Gul\'acsi and Guido Burkard.
\newblock ``Signatures of non-markovianity of a superconducting qubit''.
\newblock \href{https://dx.doi.org/10.1103/PhysRevB.107.174511}{Phys. Rev. B
  {\bf 107}, 174511}~(2023).

\bibitem{navarrete2015open}
{Navarrete-Benlloch, Carlos}.
\newblock ``{Open systems dynamics: Simulating master equations in the
  computer}''.
\newblock
  \href{https://dx.doi.org/https://doi.org/10.48550/arXiv.1504.05266}{{arXiv:1504.05266}}~(2015).

\bibitem{mayer1941molecular}
Joseph~E Mayer and Elliott Montroll.
\newblock ``Molecular distribution''.
\newblock \href{https://dx.doi.org/10.1063/1.1750822}{The Journal of Chemical
  Physics {\bf 9}, 2--16}~(1941).

\bibitem{kira2011semiconductor}
Mackillo Kira and Stephan~W Koch.
\newblock ``Semiconductor quantum optics''.
\newblock
  \href{https://dx.doi.org/https://doi.org/10.1017/CBO9781139016926}{Cambridge
  University Press}. Cambridge, UK~(2011).

\bibitem{Magesan2013Modeling}
Easwar Magesan, Daniel Puzzuoli, Christopher~E. Granade, and David~G. Cory.
\newblock ``Modeling quantum noise for efficient testing of fault-tolerant
  circuits''.
\newblock \href{https://dx.doi.org/10.1103/PhysRevA.87.012324}{Phys. Rev. A
  {\bf 87}, 012324}~(2013).

\bibitem{Puzzuoli2014Tractable}
Daniel Puzzuoli, Christopher Granade, Holger Haas, Ben Criger, Easwar Magesan,
  and D.~G. Cory.
\newblock ``Tractable simulation of error correction with honest approximations
  to realistic fault models''.
\newblock \href{https://dx.doi.org/10.1103/PhysRevA.89.022306}{Phys. Rev. A
  {\bf 89}, 022306}~(2014).

\bibitem{Mauricio2015Comparison}
Mauricio Guti\'errez and Kenneth~R. Brown.
\newblock ``Comparison of a quantum error-correction threshold for exact and
  approximate errors''.
\newblock \href{https://dx.doi.org/10.1103/PhysRevA.91.022335}{Phys. Rev. A
  {\bf 91}, 022335}~(2015).

\bibitem{corcoles2015demonstration}
Antonio~D. C{\'o}rcoles, Easwar Magesan, Srikanth~J. Srinivasan, Andrew~W.
  Cross, Matthias Steffen, Jay~M. Gambetta, and Jerry~M. Chow.
\newblock ``Demonstration of a quantum error detection code using a square
  lattice of four superconducting qubits''.
\newblock \href{https://dx.doi.org/10.1038/ncomms7979}{Nature communications
  {\bf 6}, 6979}~(2015).

\bibitem{jozsa1994fidelity}
Richard Jozsa.
\newblock ``Fidelity for mixed quantum states''.
\newblock \href{https://dx.doi.org/10.1080/09500349414552171}{Journal of modern
  optics {\bf 41}, 2315--2323}~(1994).

\bibitem{nielsen2010quantum}
M.~A. Nielsen and I.~L. Chuang.
\newblock ``Quantum computation and quantum information''.
\newblock
  \href{https://dx.doi.org/https://doi.org/10.1017/CBO9780511976667}{Cambridge
  University Press}. Cambridge, UK~(2000).

\bibitem{Rigetti2010Fully}
Chad Rigetti and Michel Devoret.
\newblock ``Fully microwave-tunable universal gates in superconducting qubits
  with linear couplings and fixed transition frequencies''.
\newblock \href{https://dx.doi.org/10.1103/PhysRevB.81.134507}{Phys. Rev. B
  {\bf 81}, 134507}~(2010).

\bibitem{Chow2011Simple}
Jerry~M. Chow, A.~D. C\'orcoles, Jay~M. Gambetta, Chad Rigetti, B.~R. Johnson,
  John~A. Smolin, J.~R. Rozen, George~A. Keefe, Mary~B. Rothwell, Mark~B.
  Ketchen, and M.~Steffen.
\newblock ``Simple all-microwave entangling gate for fixed-frequency
  superconducting qubits''.
\newblock \href{https://dx.doi.org/10.1103/PhysRevLett.107.080502}{Phys. Rev.
  Lett. {\bf 107}, 080502}~(2011).

\bibitem{Cruz2021Testing}
Pedro M.~Q. Cruz and J.~Fern\'andez-Rossier.
\newblock ``Testing complementarity on a transmon quantum processor''.
\newblock \href{https://dx.doi.org/10.1103/PhysRevA.104.032223}{Phys. Rev. A
  {\bf 104}, 032223}~(2021).

\bibitem{cruz2023shallow}
Pedro M.~Q.~Cruz and Bruno Murta.
\newblock ``Shallow unitary decompositions of quantum fredkin and toffoli gates
  for connectivity-aware equivalent circuit averaging''.
\newblock \href{https://dx.doi.org/10.1063/5.0187026}{APL Quantum {\bf 1},
  016105}~(2024).

\bibitem{Sheldon2016Procedure}
Sarah Sheldon, Easwar Magesan, Jerry~M. Chow, and Jay~M. Gambetta.
\newblock ``Procedure for systematically tuning up cross-talk in the
  cross-resonance gate''.
\newblock \href{https://dx.doi.org/10.1103/PhysRevA.93.060302}{Phys. Rev. A
  {\bf 93}, 060302}~(2016).

\bibitem{Johansson2012Qutip}
J.~Robert Johansson, Paul~D. Nation, and Franco Nori.
\newblock ``Qutip: An open-source python framework for the dynamics of open
  quantum systems''.
\newblock \href{https://dx.doi.org/10.1016/j.cpc.2012.02.021}{Computer Physics
  Communications {\bf 183}, 1760}~(2012).

\bibitem{Johansson2013Qutip}
J.~Robert Johansson, Paul~D. Nation, and Franco Nori.
\newblock ``Qutip 2: A python framework for the dynamics of open quantum
  systems''.
\newblock
  \href{https://dx.doi.org/https://doi.org/10.1016/j.cpc.2012.11.019}{Computer
  Physics Communications {\bf 184}, 1234}~(2013).

\bibitem{magnus1954exponential}
Wilhelm Magnus.
\newblock ``On the exponential solution of differential equations for a linear
  operator''.
\newblock \href{https://dx.doi.org/10.1002/cpa.3160070404}{Communications on
  pure and applied mathematics {\bf 7}, 649--673}~(1954).

\bibitem{blanes2009magnus}
Sergio Blanes, Fernando Casas, Jose-Angel Oteo, and Jos{\'e} Ros.
\newblock ``The magnus expansion and some of its applications''.
\newblock \href{https://dx.doi.org/10.1016/j.physrep.2008.11.001}{Physics
  reports {\bf 470}, 151--238}~(2009).

\bibitem{krovi2023improved}
Hari Krovi.
\newblock ``Improved quantum algorithms for linear and nonlinear differential
  equations''.
\newblock \href{https://dx.doi.org/10.22331/q-2023-02-02-913}{Quantum {\bf 7},
  913}~(2023).

\bibitem{malekakhlagh2020first}
Moein Malekakhlagh, Easwar Magesan, and David~C. McKay.
\newblock ``First-principles analysis of cross-resonance gate operation''.
\newblock \href{https://dx.doi.org/10.1103/PhysRevA.102.042605}{Phys. Rev. A
  {\bf 102}, 042605}~(2020).

\bibitem{Wood2018quantification}
Christopher~J. Wood and Jay~M. Gambetta.
\newblock ``Quantification and characterization of leakage errors''.
\newblock \href{https://dx.doi.org/10.1103/PhysRevA.97.032306}{Phys. Rev. A
  {\bf 97}, 032306}~(2018).

\bibitem{Rol2019Fast}
M.~A. Rol, F.~Battistel, F.~K. Malinowski, C.~C. Bultink, B.~M. Tarasinski,
  R.~Vollmer, N.~Haider, N.~Muthusubramanian, A.~Bruno, B.~M. Terhal, and
  L.~DiCarlo.
\newblock ``Fast, high-fidelity conditional-phase gate exploiting leakage
  interference in weakly anharmonic superconducting qubits''.
\newblock \href{https://dx.doi.org/10.1103/PhysRevLett.123.120502}{Phys. Rev.
  Lett. {\bf 123}, 120502}~(2019).

\end{thebibliography}
\end{document}